\title{Predicate transformers and Linear Logic}
\author{Pierre Hyvernat\inst{1,2}}
\institute{Institut math\'ematique de Luminy, Marseille, France\\
           \and
           Chalmers Institute of Technology, G\"oteborg, Sweden\\
           \email{hyvernat@iml.univ-mrs.fr}}
\let\ea=\expandafter
\def\tvi{\vrule height12pt depth5pt width0pt}
\newbox\hyp \newbox\concl \newdimen\ruleWidth
\def\rule@nd{\rule@end}
\def\rule@ux#1&{
 \def\tmp{#1}
  \ifx\tmp\rule@nd
   \hskip-2em$
    \else\tmp\hskip2em\ea\rule@ux
     \fi}
\def\infer#1#2#3{\relax
 \setbox\hyp=\hbox{$\rule@ux#1&\rule@end&} \setbox\concl=\hbox{$#2$}
  \ifdim \wd\concl<\wd\hyp
   \ruleWidth=\wd\hyp
    \else\ruleWidth=\wd\concl
     \fi
     \advance\ruleWidth by 1cm
     $\vcenter{
       \vbox{\hbox to\ruleWidth{\hss\tvi \unhbox\hyp \hss}}
        \hrule height.7pt depth0pt width\ruleWidth
         \vbox{\hbox to\ruleWidth{\hss\tvi \unhbox\concl \hss}}}$
          \kern.1cm \hbox{#3}}
\newcommand\ie{\textit{i.e.}~}
\newcommand\st{\hbox{\rm s.t.}~}
\newcommand\ala{\hbox{\sl \`a la}~}
\newcommand\sitem[1]{\item[\small{\textit{(#1)}}]}
\newcommand\sit[1]{{\small\textit{(#1)}}}
\newcommand{\Prop}{\mathrm{P\kern-1pt r}}
\newcommand{\Set}{\hbox{\bf Set}}        % categories: Set
\newcommand{\Rel}{\hbox{\bf Rel}}        %             Rel
\newcommand{\POW}{\hbox{\bf Pow}}        %             Pow
\newcommand{\Int}{\hbox{\bf Int}}        %             Spec (interfaces)
\newcommand{\Pow}{\mathcal{P}}           % powerset
\newcommand{\Mulf}{\mathcal{M}_{\!f}}    % finite multisets
\newcommand{\Sd}{\mathcal{S}}            % seeds
\newcommand\compl[1]{\overline{#1}}      % complement with
\newcommand\Id{\hbox{\bf Id}}       % identity
\newcommand\pr{\ast}
\renewcommand\Pr{\bigsqcap}
\newcommand\step[2]{\hskip.25cm$#1$\ifx\empty#2\else\hskip.25cm$\{$ {\small #2} $\}$\fi}
\newcommand\tensor\otimes
\newcommand\plus\oplus
\newcommand\with{\mathbin\&}
\newcommand\parsym{\rotatebox[origin=c]{180}{\ensuremath{\&}}}
\newcommand\Par{\mathbin{\parsym}}
\newcommand\linearArrow{\mathbin{\relbar\mskip-8mu\circ}}
\newcommand\bone{\ensuremath\mathbf{1}}
\newcommand\bzero{\ensuremath\mathbf{0}}
\newcommand\p{\raise.7pt\hbox{$\scriptstyle+$}}
\newcommand\m{\raise.7pt\hbox{$\scriptstyle-$}}
\newcommand\magic{\mathsf{magic}}
\newcommand\PI{\mathsf{\Pi}}
\begin{document}

%%%%%%%%%%%%%%%%%%%%%%%%%%%%%%%%%%%%%%%%%%%%%%%%%%%%%%%%%%%%%%%%%%%%%%%%%%%%%%
%%% title / abstract <<<1
\maketitle

\begin{abstract}
  In the refinement calculus, monotonic predicate transformers are used to
  model specifications for (imperative) programs. Together with a natural
  notion of simulation, they form a category enjoying many algebraic
  properties.

  We build on this structure to make predicate transformers into a
  denotational model of full linear logic: all the logical constructions have
  a natural interpretation in terms of predicate transformers (\ie in terms of
  specifications). We then interpret proofs of a formula by a safety property
  for the corresponding specification.
\end{abstract}
%%% >>>1

%%%%%%%%%%%%%%%%%%%%%%%%%%%%%%%%%%%%%%%%%%%%%%%%%%%%%%%%%%%%%%%%%%%%%%%%%%%%%%
\section*{Introduction} %%%<<<1

The first denotational model for linear logic was the category of
\emph{coherent spaces} (\cite{LL}). In this model, formulas are interpreted by
graphs; and proofs by \emph{cliques} (complete subgraphs). This forms a special
case of domain \ala Scott.

From a conceptual point of view, the construction of interfaces is a little
different: first, the model looks a little more dynamic; then, \emph{seeds}
---the notion corresponding to cliques--- are not closed under substructures;
and finally, they are closed under arbitrary unions (usually, only directed
unions are allowed).

What was a little unexpected is that the interpretation of linear proofs used
in the relational model can be lifted directly to this structure to yield a
denotational model of full linear logic in the spirit of
$\_$/hyper/multi-coherence or finiteness spaces.

A promising direction for further research is to explore the links between the
model presented below and non-determinism as it appears both in the
differential lambda-calculus (\cite{difflambda,diffnet}) and different kind of
process calculi. We expect such a link because of the following remarks: this
model comes from the semantics of imperative languages; it can be extended to
a model of the differential lambda calculus (which can be seen as a variant of
``lambda calculus with resource'') and there is a completely isomorphic
category in which predicate transformers are replaced by (two-sided)
transition systems. In particular, all of the logical operations presented
below have natural interpretations in terms of processes...
%%%>>>1

%%%%%%%%%%%%%%%%%%%%%%%%%%%%%%%%%%%%%%%%%%%%%%%%%%%%%%%%%%%%%%%%%%%%%%%%%%%%%%
\section{Relations and Predicate Transformers} %%%<<<1

\begin{definition}
  A relation $r$ between two sets is a subset of their cartesian product. We
  write $r^\sim$ for the \emph{converse} relation: $r^\sim = \big\{ (b,a)\ |\
  (a,b)\in r\big\}$.
\\
  The composition of two relations $r\subseteq A\times B$ and $r'\subseteq
  B\times C$ is defined by $r'\cdot r = \big\{ (a,c)\ |\ (\exists b\in B)\
  (a,b)\in r \land (b,c)\in r'\big\}$.
\\
  If $X$ is a set, $\Id_X$ denotes the identity on $X$, \ie $\Id_X=\{(a,a)\ |\
  a\in X\}$.
\end{definition}

There seems to be three main notions of morphisms between sets. These give rise
to three important categories in computer science:
\begin{itemize}
  \item $\Set$, where morphisms are functions;
  \item $\Rel$, where morphisms are (binary) relations;
  \item $\POW$, where morphisms are monotonic \emph{predicate transformers.}
\end{itemize}
One can go from $\Set$ to $\Rel$ and from $\Rel$ to $\POW$ using the same
categorical construction (\cite{newalg}) which cannot be applied further.

\begin{definition}
  A {predicate transformer} from $A$ to $B$ is a function from $\Pow(A)$ to
  $\Pow(B)$.  A predicate transformer $P$ is {monotonic} if $x\subseteq x'$
  implies $P(x)\subseteq P(x')$.
\end{definition}
From now on, we will consider only monotonic predicate transformers. The
adjective ``monotonic'' is thus implicit everywhere.

The term ``predicate'' might not be the most adequate but the terminology was
introduced by E. Dijkstra some decades ago, and has been used extensively by
computer scientists since then. Formally, a predicate on a set $A$ can be
identified with a subset of $A$ by the separation axiom of ZF set theory; the
confusion is thus harmless.

\begin{definition}
  If $r$ is a relation between $A$ and $B$, we write $\langle r\rangle
  :\Pow(A)\to\Pow(B)$ for the following predicate transformer:
  {\hskip.5cm\small(called the direct image of $r$)}
  \[
    \langle r\rangle (x) = \big\{ b\in B\ |\ (\exists a\in A)\ (a,b)\in r\land
    a\in x\big\} \ \hbox{.}
  \]
\end{definition}
Note that in the traditional version of the refinement calculus (\cite{RC}),
our $\langle r\rangle$ is written $\{r^\sim\}$, but this notation clashes with
set theoretic notation and would make our formulas very verbose with $\_^\sim$
everywhere.
%%%>>>1

%%%%%%%%%%%%%%%%%%%%%%%%%%%%%%%%%%%%%%%%%%%%%%%%%%%%%%%%%%%%%%%%%%%%%%%%%%%%%%
\section{Interfaces} %%%<<<1

Several denotational models of linear logic can be seen as ``refinements'' of
the relational model. This very crude model interprets formulas by sets; and
proofs by subsets. It is degenerate in the sense that any formula is
identified with its linear negation! Coherent spaces (\cite{LL}),
hypercoherent spaces (\cite{hyper}), finiteness spaces (\cite{finiteness})
remove (part of) this degeneracy by adding structure on top of the relational
model.  We follow the same approach:
\begin{definition} An \emph{interface} $X$ is given by a set $|X|$ (called the
  \emph{state space}) and a predicate transformer $P_X$ on $|X|$ (called the
  \emph{specification}).
\end{definition}
The term ``specification'' comes from computer science, where a specification
usually takes the form:

\bgroup\parindent=2cm\narrower\sl\noindent
  if the program is started in a state satisfying $\phi$, it will terminate;
  and the final state will satisfy $\psi$.\par
\egroup

\noindent
Such a specification can be identified with the (monotonic) predicate transformer
$\psi\mapsto\hbox{``biggest such $\phi$''}$.  This point of view is that of
the $\mathbf{wp}$ calculus, introduced by Dijkstra (``$\mathbf{wp}$'' stands
for ``weakest precondition'').  Note that the specification ``goes backward in
time'': it associates to a set of final states (which we want to reach) a set
of initial states (which guarantee that we will reach our goal).\footnote{In a
previous version, interfaces also had to enjoy the property
$P(\emptyset)=\emptyset$ and $P(|X|)=|X|$. This condition doesn't interact
well with second order interpretation and has thus been dropped.}

For a complete introduction to the field of predicate transformers in relation
to specifications, we refer to~\cite{RC}.

\medbreak
In the coherence semantics, a ``point'' is a complete subgraph,\footnote{The
intuition is that a set of data is coherent iff it is pairwise coherent.}
called a \emph{clique.} Since the intuitions behind our objects are quite
different, we change the terminology.
\begin{definition}
  Let $X$ be an interface, a subset $x\subseteq|X|$ is called a \emph{seed} of
  $X$ if $x \subseteq P_X(x)$. We write $\Sd(X)$ for the collection of seeds
  of $X$.
\end{definition}
More traditional names for seeds are safety properties, or $P$-invariant
properties: if some initial state is in $x$, no matter what, after each
execution of a program satisfying specification $P$, the final state will
still be in $x$. In other words, $P$ maintains an invariant, namely
``staying in $x$''. In particular, there can be no program deadlock when
starting from $x$.

\smallbreak
The collection of cliques in the (hyper)coherent semantics forms a c.p.o.: the
sup of any directed family exists. The collection of seeds in an interface
satisfies the stronger property:

\begin{lemma}
  For any interface $X$, $\big(\Sd(X),\subseteq\big)$ is a complete
  sup-lattice.
\end{lemma}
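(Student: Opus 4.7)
The plan is to establish the stronger fact that $\Sd(X)$ is closed under arbitrary unions in $\Pow(|X|)$. Once this is in hand, the supremum of any family of seeds computed in $(\Sd(X),\subseteq)$ coincides with their union in $\Pow(|X|)$, and completeness follows immediately (the empty union gives the bottom element, and any meets that one might want can be recovered as suprema of lower bounds).

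First I would fix a family $\{x_i\}_{i\in I}$ of seeds and set $y=\bigcup_{i\in I}x_i$. For each $i$, the hypothesis $x_i\subseteq P_X(x_i)$ combined with $x_i\subseteq y$ and the monotonicity of $P_X$ yields $x_i\subseteq P_X(x_i)\subseteq P_X(y)$. Taking the union over $i$ on the left gives $y\subseteq P_X(y)$, so $y\in\Sd(X)$.

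Next I would check that $y$ really is the least upper bound in $(\Sd(X),\subseteq)$: it is an upper bound by construction, and if $z\in\Sd(X)$ satisfies $x_i\subseteq z$ for every $i$, then $y=\bigcup_i x_i\subseteq z$. The empty case ($I=\emptyset$) produces $\emptyset\in\Sd(X)$, which is indeed a seed since $\emptyset\subseteq P_X(\emptyset)$ holds vacuously, so the lattice has a bottom element.

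There is no real obstacle: the entire argument hinges on the monotonicity assumption on predicate transformers, which was built into the definition. The strengthening from ``directed sups exist'' (as for cliques in coherent spaces) to ``all sups exist'' is exactly what monotonicity without any continuity requirement buys us.
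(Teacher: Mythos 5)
Your proof is correct and follows exactly the paper's argument: the empty set is trivially a seed, and monotonicity of $P_X$ makes an arbitrary union of seeds a seed, which yields all suprema. You simply spell out the details (the step $x_i\subseteq P_X(x_i)\subseteq P_X(y)$ and the least-upper-bound check) that the paper leaves implicit.
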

\begin{proof}
  $\emptyset$ is trivially a seed; and by monotonicity of $P$, a union of
  seeds is a seed.
\qed
\end{proof}

The fact that seeds are closed under union may seem counter-intuitive at
first; but one possible interpretation is that we allow for non-deterministic
data.  For example, all denotational models of linear logic have an object for
the booleans: its state space is $\{t,f\}$, and the cliques are always
$\emptyset$, $\{t\}$ and $\{f\}$. The union of $\{t\}$ and $\{f\}$ is usually
not itself a clique because ``one cannot get both true and false''. However,
if one interprets union as a non-deterministic sum, then $\{t,f\}$ is a
perfectly sensible set of data.

However, nothing guarantees that a seed is the unions of all its finite
subseeds; a given seed needs not even contain any finite seed!. (The
canonical example being $P_X(x)=X$, with $X$ infinite.)
%%%>>>1

%%%%%%%%%%%%%%%%%%%%%%%%%%%%%%%%%%%%%%%%%%%%%%%%%%%%%%%%%%%%%%%%%%%%%%%%%%%%%%
\section{Constructions on Interfaces} %%%<<<1

A denotational model interprets formulas as objects in a category (and proofs
as morphisms). We thus need to define all the constructions of linear logic at
the level of interfaces. The most interesting cases are the linear negation
and the tensor product (and the exponentials, but they will be treated in
section~\ref{Section:exp}).

Note that there will always be an ``ambient'' set $A$ for predicates. We write
$\compl{x}$ for the $A$-complement of $x$.

\smallbreak
\noindent
Let $X=(|X|,P_X)$ and $Y=(|Y|,P_Y)$ be two interfaces;

\begin{definition}
  The \emph{dual} of $X$ is defined as $(|X|,P_{X}^{\bot})$ where
  $P_{X}^{\bot}(x) = \compl{P_X(\compl{x})}$. We write it $X^\bot$.  An
  \emph{antiseed} of $X$ is a seed in $X^\bot$.
\end{definition}
In terms of specifications, $a\in P^\bot(x)$ means \textsl{``if the program is
started in $a$, and if execution terminates, the final state will be in
$x$''.} If $P$ is concerned with $\mathbf{wp}$ calculus, then $P^\bot$ is
concerned with $\mathbf{wlp}$ calculus. (Weakest liberal precondition, also
introduced by Dijkstra.)

This operation of ``negation'' is the reason we do not ask for any properties
on the predicate transformer. It respects neither continuity nor commutation
properties! In many respects, this operation is not very well-behaved.

\begin{definition}
  The \emph{tensor} of $X$ and $Y$ is the interface $(|X|\times |Y|,P_X\tensor
  P_Y)$ where $P_X\tensor P_Y (r)$ is the predicate transformer
  \[
    r\mapsto \bigcup_{x\times y\subseteq r} P_X(x)\times P_Y(y) \ \hbox{.}
  \]
  We write it $X\tensor Y$.
\end{definition}
$P_X\tensor P_Y$ is the most natural transformer to construct on $|X|\times
|Y|$. It was used in~\cite{productRC} to model parallel execution of
independent pieces of programs. The intuition is the following: a program
satisfies $P_X\tensor P_Y$ if, when you start it in the pair $(a_i,b_i)\in
P_X\tensor P_Y(r)$ of initial states, the two final states will be related
through $r$. In particular, this means that execution is \emph{synchronous}:
both executions need to terminate.

\begin{definition}
  The \emph{with} of $X$ and $Y$ is the interface $(|X|+|Y|,P_X\with P_Y)$
  where $P_X\with P_Y(x,y) = \big(P_X(x),P_Y(y)\big)$.\footnote{it uses
  implicitly the fact that $\Pow(|X|+|Y|)\simeq\Pow(|X|)\times \Pow(|Y|)$} We
  write it $X\with Y$.
\end{definition}
This operation is not very interesting from the specification point of view:
it is a kind of disjoint union.

\begin{definition}
The other connectives are defined as usual:
  \begin{itemize}
  \item $\bzero=(\emptyset,\Id)$; $\top = \bzero^\bot$; $\bone=(\{*\},\Id)$; $\bot=\bone^\bot$;
  \item $X\plus Y$ (\emph{plus}) is the interface $\big(X^\bot\with Y^\bot\big)^\bot$;
  \item $X\Par Y$ (\emph{par}) is the interface $\big(X^\bot\tensor Y^\bot\big)^\bot$;
  \item $X \linearArrow Y$ is the interface $X^\bot\Par Y$.
  \end{itemize}
\end{definition}

\noindent
We have:
\begin{lemma}
  $\bot=\bone$; $\top=\bzero$ and $X\plus Y = X\with Y$.
\end{lemma}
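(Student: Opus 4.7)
The plan is to check each of the three equalities by unfolding the definitions, using that complementation is involutive on $\Pow(A)$ and that negation of interfaces is therefore involutive: $P^{\bot\bot}(x) = \compl{\compl{P(\compl{\compl{x}})}} = P(x)$, hence $X^{\bot\bot}=X$.

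For $\bot = \bone$, I would simply compute $P_\bone^\bot$. Since $P_\bone = \Id$ on $\Pow(\{*\})$, we have $P_\bone^\bot(x) = \compl{\Id(\compl{x})} = \compl{\compl{x}} = x = \Id(x)$, so $\bone^\bot$ has the same state space $\{*\}$ and same specification $\Id$ as $\bone$. For $\top = \bzero$, I would observe that $\Pow(\emptyset) = \{\emptyset\}$, so there is a unique predicate transformer on $\emptyset$ (which is necessarily $\Id$) and the equality of interfaces on the empty state space is automatic; a one-line check $P_\bzero^\bot(\emptyset) = \compl{\Id(\emptyset)} = \emptyset$ confirms this.

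For $X \oplus Y = X \with Y$, the key step is a De Morgan identity for $\with$: I claim $(X \with Y)^\bot = X^\bot \with Y^\bot$. To verify it, identify $\Pow(|X|+|Y|)$ with $\Pow(|X|) \times \Pow(|Y|)$ (as the footnote does), so that complementation on $|X|+|Y|$ is computed coordinatewise. Then for any $(x,y)$,
\[
(P_X \with P_Y)^\bot(x,y) = \compl{(P_X(\compl{x}), P_Y(\compl{y}))} = (\compl{P_X(\compl{x})},\compl{P_Y(\compl{y})}) = (P_X^\bot \with P_Y^\bot)(x,y).
\]
Applying this identity with $X$ and $Y$ replaced by $X^\bot$ and $Y^\bot$ (and using involutivity of $\bot$), we get $X^\bot \with Y^\bot = (X \with Y)^\bot$, hence $X \oplus Y = (X^\bot \with Y^\bot)^\bot = (X \with Y)^{\bot\bot} = X \with Y$.

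There is no real obstacle here; the only thing that requires a moment of care is the identification $\Pow(|X|+|Y|) \simeq \Pow(|X|)\times\Pow(|Y|)$ and the fact that complementation on the disjoint union corresponds to pairwise complementation, which is precisely what makes the De Morgan law for $\with$ go through pointwise.
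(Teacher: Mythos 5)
Your proof is correct and matches the paper's intent: the paper simply declares the proof ``immediate,'' and your argument is exactly the routine unfolding (involutivity of $\_^\bot$, triviality on $\{*\}$ and $\emptyset$, and the coordinatewise De Morgan law for $\with$ under $\Pow(|X|+|Y|)\simeq\Pow(|X|)\times\Pow(|Y|)$) that this remark presupposes.
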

The proof is immediate. The first two equalities are satisfied in several of
the denotational models of LL; the second one is a little less common. (For
example, it is satisfied in finiteness spaces, but in no ...-coherence
spaces.)

\medbreak
\noindent
As an application of the definitions, let's massage the definition of $A
\linearArrow B$ into something readable:

\bgroup\parindent=0pt
$(a,b) \in A \linearArrow B(r)$

\step{\Leftrightarrow}{definition}

$(a,b) \in \big(A^\bot \Par B\big)(r)$

\step{\Leftrightarrow}{definition, involutivity of $\_^\bot$}

$(a,b) \in \big(A \tensor B^\bot\big)^\bot(r)$

\step{\Leftrightarrow}{definition of $\_^\bot$}

$(a,b) \notin A \tensor B^\bot (\compl{r})$

\step{\Leftrightarrow}{definition of $\tensor$}

$\lnot\big((\exists x\times y\subseteq \compl{r})\ a \in A(x) \land b\in B^\bot(y)\big)$

\step{\Leftrightarrow}{logic}

$(\forall x\times y\subseteq \compl{r})\  a\notin A(x) \lor b\notin B^\bot(y)$

\step{\Leftrightarrow}{logic}

$(\forall  x\times y\subseteq \compl{r}) \ a\in A(x) \Rightarrow b\in B(\compl{y})$

\step{\Leftrightarrow}{lemma: $x\times y\subseteq\compl{r}$ iff $\langle r\rangle x\subseteq \compl{y}$}

$(\forall  \langle r\rangle x\subseteq \compl{y}) \ a\in A(x) \Rightarrow b\in B(\compl{y})$

\step{\Leftrightarrow}{change of variable: $y\mapsto\compl{y}$}

$(\forall  \langle r\rangle x\subseteq y) \ a\in A(x) \Rightarrow b\in B(y)$.
\egroup

\smallbreak
\noindent
From this, we derive:
\begin{lemma} \label{lem:identity}
  $(a,b) \in A \linearArrow B(r)$ iff $a\in A(x) \Rightarrow b\in
  B(\langle r\rangle x)$ for all $x\subseteq|X|$.
\\
  For any interface $X$, $\Id_{|X|}\in\Sd(X \linearArrow X)$.
\end{lemma}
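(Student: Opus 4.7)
The plan is to treat the lemma in two stages, reusing essentially all of the calculation that precedes it.

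For the first equivalence, I would take the chain of equivalences displayed just before the lemma as already establishing that $(a,b) \in A \linearArrow B(r)$ is equivalent to the quantified statement $(\forall \langle r\rangle x\subseteq y) \ a\in A(x) \Rightarrow b\in B(y)$. It then remains to show that this universal statement over all supersets $y$ of $\langle r\rangle x$ collapses to the single instance $y = \langle r\rangle x$. The forward direction is immediate by specialising $y$ to $\langle r\rangle x$. The reverse direction is where monotonicity is used: if $a \in A(x) \Rightarrow b \in B(\langle r\rangle x)$ and $\langle r\rangle x \subseteq y$, then $B(\langle r\rangle x) \subseteq B(y)$ by monotonicity of $P_B$, so the implication survives passing from $\langle r\rangle x$ to $y$.

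For the second statement, I would apply the first part with $A = B = X$ and $r = \Id_{|X|}$. The key observation is that $\langle \Id_{|X|}\rangle x = x$ for every $x \subseteq |X|$, which is immediate from the definition of direct image. Therefore, a pair $(a,b) \in \Id_{|X|}$ (so that $b = a$) lies in $(X \linearArrow X)(\Id_{|X|})$ iff for all $x \subseteq |X|$, $a \in P_X(x) \Rightarrow a \in P_X(x)$, which is tautological. Hence $\Id_{|X|} \subseteq (X \linearArrow X)(\Id_{|X|})$, i.e.\ $\Id_{|X|}$ is a seed of $X \linearArrow X$.

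I do not anticipate a real obstacle here: all of the conceptual work has been done in the preceding derivation of $\linearArrow$, and monotonicity plus $\langle \Id\rangle x = x$ do the rest. The only point requiring a moment's care is making sure that the quantifier simplification in the first part genuinely uses nothing beyond monotonicity of $P_B$, since this is what justifies replacing an arbitrary $y \supseteq \langle r\rangle x$ by the canonical witness $\langle r\rangle x$ itself.
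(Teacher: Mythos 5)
Your proof is correct and follows essentially the same route the paper intends: the lemma is stated as a direct consequence (``From this, we derive'') of the preceding chain of equivalences, with the quantifier over $y\supseteq\langle r\rangle x$ collapsed to the instance $y=\langle r\rangle x$ via monotonicity of $P_B$, and the seed property of $\Id_{|X|}$ following from $\langle\Id_{|X|}\rangle x=x$. Nothing is missing.
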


The shapes of images along $X\Par Y$ are usually difficult to visualize, but we
have the following on ``rectangles'':
\begin{lemma} \label{lem:tensor_par_rect}
  Let $X$ and $Y$ be interfaces; then for all $x\subseteq |X|$ and
  $y\subseteq|Y|$ we have: $P_X\tensor P_Y(x\times y)=P_X(x)\times P_Y(y)
  \subseteq P_X\Par P_Y(x\times y)$.
\end{lemma}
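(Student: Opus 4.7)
My plan is to split the statement into two independent claims, handle the equality first (purely by unfolding the definition of $\tensor$ and using monotonicity of $P_X$, $P_Y$) and then obtain the inclusion by unfolding $\Par$ as the dual of a tensor.

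For the equality, I would start from the definition
\[
P_X\tensor P_Y(x\times y) \;=\; \bigcup_{x'\times y'\subseteq x\times y} P_X(x')\times P_Y(y').
\]
The inclusion $\supseteq$ is immediate by taking $x'=x$, $y'=y$. For $\subseteq$, whenever $x'\times y'\subseteq x\times y$ with $x',y'$ both nonempty, I would observe that $x'\subseteq x$ and $y'\subseteq y$, so monotonicity gives $P_X(x')\times P_Y(y')\subseteq P_X(x)\times P_Y(y)$; the (vacuous) case where $x'$ or $y'$ is empty contributes an empty rectangle. This settles the first half.

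For the inclusion in $P_X\Par P_Y(x\times y)$, I would unfold the par: by definition, $(a,b)\in P_X\Par P_Y(x\times y)$ iff $(a,b)\notin P_X^\bot\tensor P_Y^\bot(\compl{x\times y})$. Assume $a\in P_X(x)$ and $b\in P_Y(y)$ and suppose, for contradiction, that there exist $x'\subseteq|X|$, $y'\subseteq|Y|$ with $x'\times y'\subseteq\compl{x\times y}$ and $a\in P_X^\bot(x')$, $b\in P_Y^\bot(y')$. The inclusion $x'\times y'\subseteq\compl{x\times y}$ forces $(x'\cap x)\times(y'\cap y)=\emptyset$, hence either $x'\subseteq\compl{x}$ or $y'\subseteq\compl{y}$.

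In the first case, $\compl{x'}\supseteq x$, so by monotonicity $P_X(\compl{x'})\supseteq P_X(x)\ni a$, which contradicts $a\in P_X^\bot(x')=\compl{P_X(\compl{x'})}$. The second case is symmetric. I expect the main (very mild) obstacle to be keeping the three complementations in the par/tensor duality straight; the rectangle disjunction $x'\subseteq\compl{x}\lor y'\subseteq\compl{y}$ is exactly the combinatorial fact that makes the proof go through, and after that monotonicity of $P_X,P_Y$ finishes both cases without further calculation.
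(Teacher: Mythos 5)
Your argument is essentially the paper's: the equality is handled by unfolding the definition of $\tensor$, and the inclusion into $P_X\Par P_Y(x\times y)$ is obtained from exactly the same combinatorial claim (a rectangle $x'\times y'$ disjoint from $x\times y$ must have $x'\subseteq\compl{x}$ or $y'\subseteq\compl{y}$) followed by monotonicity; the paper states this directly where you run it by contradiction, but the content is identical, and your handling of the three complementations is correct.

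One step would fail as literally written, though. In the $\subseteq$ direction of the equality you dismiss the rectangles with $x'=\emptyset$ (or $y'=\emptyset$) as contributing ``an empty rectangle''. Under the paper's current definitions this is not justified: the condition $P(\emptyset)=\emptyset$ was explicitly dropped (see the footnote on interfaces), so the term $P_X(\emptyset)\times P_Y(y')$ need not be empty, and since $\emptyset\times y'\subseteq x\times y$ holds for \emph{arbitrary} $y'$, such a term need not sit inside $P_X(x)\times P_Y(y)$ (take $P_X=\magic_{|X|}$ and $b\in P_Y(|Y|)\setminus P_Y(y)$). So your proof of the equality needs either the convention that the union defining $\tensor$ ranges over rectangles with both factors nonempty, or the old hypothesis $P(\emptyset)=\emptyset$; to be fair, the paper's own proof says only that this half is ``straightforward'' and glosses over the same degenerate case, so the gap is as much in the statement as in your argument. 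The second half --- the inclusion $P_X(x)\times P_Y(y)\subseteq P_X\Par P_Y(x\times y)$, which is what Lemma~\ref{lem:seed-tensor} actually uses --- is unaffected: there the degenerate rectangles cause no trouble, since $x'=\emptyset$ trivially gives $x\subseteq\compl{x'}$ and monotonicity still applies.
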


\begin{proof}\parindent=0pt
That $P_X\tensor P_Y(x\times y)=P_X(x)\times P_Y(y)$ is straightforward.

\smallbreak
Suppose now $a\in P_X(x)$ and $b\in P_Y(y)$, let's show that $(a,b)\in P_X\Par
P_Y(x\times y)$:

suppose $x'\times y'\subseteq\compl{x\times y}$

\step{\Rightarrow}{claim (see below)}

$x\subseteq \compl{x'} \lor y\subseteq \compl{y'}$

\step{\Rightarrow}{monotonicity}

$a\in P_X(\compl{x'}) \lor b\in P_Y(\compl{y'})$.

\smallbreak
\textit{Claim:} $x'\times y'\subseteq\compl{x\times y} \Rightarrow x\subseteq \compl{x'} \lor
y\subseteq \compl{y'}$

\textit{Proof of claim:} suppose $\lnot(x\subseteq
\compl{x'})\land\lnot(y\subseteq \compl{y'})$

\step{\Rightarrow}{}
$x\cap x'\neq\emptyset \land y\cap y'\neq\emptyset$

\step{\Rightarrow}{}
$x\times y \cap x'\times  y'\neq\emptyset$

\step{\Rightarrow}{}
$\lnot(x'\times y'\subseteq \compl{x\times y})$.
\qed
\end{proof}

\medbreak
Furthermore, seeds in $A$ and $B$ are related to seeds in $A\tensor B$ and
$A\Par B$ in the following way:
\begin{lemma}\label{lem:seed-tensor}
  Let $A$ and $B$ be interfaces. We have:
  \begin{itemize}
  \sitem{i} if $x\in\Sd(A)$ and $y\in\Sd(B)$ then $x\times y\in\Sd(A\tensor B)$;
  \sitem{ii} if $x\in\Sd(A)$ and $y\in\Sd(B)$ then $x\times y\in\Sd(A\Par B)$.
  \end{itemize}
\end{lemma}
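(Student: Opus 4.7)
The plan is to reduce both parts directly to Lemma~\ref{lem:tensor_par_rect}, which already describes what $P_A\tensor P_B$ and $P_A\Par P_B$ do on rectangles, so very little real work remains once that lemma is in hand. The only thing to keep in mind is the shape of the seed condition: for a subset $z\subseteq |A|\times|B|$ we must show $z\subseteq P_A\tensor P_B(z)$ (resp.\ $z\subseteq P_A\Par P_B(z)$), and here $z$ is itself a rectangle $x\times y$.

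For part \sit{i}, I would simply apply the first assertion of Lemma~\ref{lem:tensor_par_rect} to the rectangle $x\times y$: it gives the equality $P_A\tensor P_B(x\times y)=P_A(x)\times P_B(y)$. Since $x\in\Sd(A)$ and $y\in\Sd(B)$ we have $x\subseteq P_A(x)$ and $y\subseteq P_B(y)$, and taking the cartesian product of these two inclusions yields $x\times y\subseteq P_A(x)\times P_B(y)=P_A\tensor P_B(x\times y)$, which is exactly the definition of $x\times y\in\Sd(A\tensor B)$.

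For part \sit{ii}, I would chain the previous computation with the inclusion part of Lemma~\ref{lem:tensor_par_rect}, namely $P_A(x)\times P_B(y)\subseteq P_A\Par P_B(x\times y)$. Combining it with the inclusion $x\times y\subseteq P_A(x)\times P_B(y)$ obtained in part \sit{i} gives $x\times y\subseteq P_A\Par P_B(x\times y)$, so $x\times y\in\Sd(A\Par B)$.

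There is really no obstacle here; everything non-trivial has already been absorbed into Lemma~\ref{lem:tensor_par_rect}. The only point worth flagging is that the converse inclusion $P_A\Par P_B(x\times y)\subseteq P_A(x)\times P_B(y)$ does \emph{not} hold in general (par is larger than tensor on rectangles), so one should be careful to invoke the $\subseteq$-half of the lemma, not to confuse tensor and par. Beyond that, the proof is essentially two lines.
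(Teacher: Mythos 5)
Your proposal is correct and follows the paper's own route: the paper also disposes of \sit{i} as immediate (via the rectangle equality $P_A\tensor P_B(x\times y)=P_A(x)\times P_B(y)$) and derives \sit{ii} directly from Lemma~\ref{lem:tensor_par_rect}'s inclusion $P_A(x)\times P_B(y)\subseteq P_A\Par P_B(x\times y)$. Your write-up merely spells out the two-line chain of inclusions that the paper leaves implicit.
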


\begin{proof} \parindent=0pt
The first point is obvious; the second point is a direct consequence of
Lemma~\ref{lem:tensor_par_rect}.
\qed
\end{proof}
%%%>>>1

%%%%%%%%%%%%%%%%%%%%%%%%%%%%%%%%%%%%%%%%%%%%%%%%%%%%%%%%%%%%%%%%%%%%%%%%%%%%%%
\section{Linear Proofs and Seeds} %%%<<<1

The previous section gave a way to interpret any linear formula $F$ by a
interface $F^*$. (When no confusion arises, $F^*$ is written $F$.) We now
interpret linear proofs of $F$ as subsets of the state space of
$F^*$.\footnote{recall that a sequent $A_1,\dots A_n$ is interpreted by
$A_1\Par\dots A_n$ and the notation $\pi\vdash\Gamma$ means ``$\pi$ is a proof
of sequent $\Gamma$''} We refer to~\cite{LL} or the abundant literature on the
subject for the motivations governing those inference rules.

\begin{itemize}
\sitem1 If $\pi$ is \infer{}{\vdash\bone}{}
then $\pi^*=\{*\}$;

\sitem2 if $\pi$ is \infer{}{\vdash\Gamma,\top}{}
then $\pi^*=\emptyset$;

\sitem3 if $\pi$ is \infer{\pi_1 \vdash\Gamma}{\vdash\Gamma,\bot}{}
then $\pi^*=\big\{(\gamma,*)\ |\ \gamma\in \pi_1^*\big\}$;

\sitem4 if $\pi$ is \infer{\pi_1 \vdash \Gamma,A,B}{\vdash\Gamma,A\Par B}{}
then $\pi^*=\big\{\big(\gamma,(a,b)\big) \ |\ (\gamma,a,b)\in \pi_1^*\big\}$;

\sitem5  if $\pi$ is \infer{\pi_1 \vdash \Gamma,A &
\pi_2\vdash\Delta,B}{\vdash\Gamma,\Delta,A\tensor B}{}\par
\leavevmode\hfill then $\pi^*=\pi_1^*\tensor\pi_2^* = \big\{ \big(\gamma,\delta,(a,b)\big)\ |\ (\gamma,a)\in\pi_1^*
\land (\delta,b)\in\pi_2^*\big\}$;

\sitem6 if $\pi$ is \infer{\pi_1\vdash\Gamma,A}{\vdash\Gamma,A\plus B}{}
then $\pi^*=\big\{\big(\gamma,(1,a)\big)\ |\ (\gamma,a)\in \pi_1^*\big\}$;

\sitem7 if $\pi$ is \infer{\pi_1\vdash\Gamma,B}{\vdash\Gamma,A\plus B}{}
then $\pi^*=\big\{\big(\gamma,(2,b)\big)\ |\ (\gamma,b)\in \pi_1^*\big\}$;

\sitem8 if $\pi$ is  \infer{\pi_1\vdash\Gamma,A &
\pi_2\vdash\Gamma,B}{\vdash\Gamma,A\with B}{}\par
\leavevmode\hfill then $\pi^*$ is
$\big\{\big(\gamma,(1,a)\big) | (\gamma,a)\in\pi_1^*\big\} \cup\big\{(\gamma,(2,b)) | \big(\gamma,b\big)\in\pi_2^*\big\}$;

\sitem9 if $\pi$ is \infer{\pi_1\vdash\Gamma,A &
\pi_2\vdash\Delta,A^\bot}{\vdash\Gamma,\Delta}{}\par
\leavevmode\hfill then $\pi^*=\big\{(\gamma,\delta)\ |\ (\exists\,a)\ (\gamma,a)\in\pi_1^* \land (\delta,a)\in\pi_2^*\big\}$.

\end{itemize}

\noindent
This interpretation is correct in the following sense:

\begin{proposition}\label{prop:sound-MALL}
  If $\pi$ a proof of $F$, then $\pi^*$ is a seed in $F^*$.
\end{proposition}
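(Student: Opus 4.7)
My plan is to prove this by structural induction on $\pi$, verifying for each rule that if the premise interpretations are seeds of their respective formulas, then $\pi^*$ is a seed of the conclusion. The standard axiom $\vdash A,A^\bot$, though not explicitly listed, is handled directly by Lemma~\ref{lem:identity}.

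The bookkeeping cases go quickly. Rules (1) and (2) are immediate from the definitions of $\bone$ and $\top$. Rules (3) and (4) merely pad or regroup tuples: the underlying state spaces are identified and the $\Par$ specification acts trivially on the inserted $*$, so the seed property transfers from the premise (for (3) one can alternately invoke Lemma~\ref{lem:seed-tensor}(ii) with $\{*\}\in\Sd(\bone)$). For the additive rules (6), (7), (8), since $|X\with Y|=|X|+|Y|$ and $P_{X\with Y}$ acts component-wise, a subset is a seed iff each component is; using $X\plus Y=X\with Y$, the injections in (6) and (7) preserve seeds, and the union in (8) is a seed by Lemma~1 (closure under unions). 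For the tensor rule (5), the IH gives $\pi_1^*\in\Sd(\Gamma\Par A)$ and $\pi_2^*\in\Sd(\Delta\Par B)$; after reassociating the tuple $(\gamma,\delta,(a,b))$ as a pair in $(\Gamma\Par A)\times(\Delta\Par B)$, one applies Lemma~\ref{lem:seed-tensor}(i) to get a seed of $(\Gamma\Par A)\tensor(\Delta\Par B)$, and Lemma~\ref{lem:tensor_par_rect} (on rectangles) then yields the seed condition for $\Gamma\Par\Delta\Par(A\tensor B)$.

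The cut rule (9) is the real work. Assume by IH that $\pi_1^*\in\Sd(\Gamma\Par A)$ and $\pi_2^*\in\Sd(\Delta\Par A^\bot)$. Unfolding $\Par$ via the characterization derived in the excerpt for $\_\linearArrow\_$, these conditions say: for $(\gamma,a)\in\pi_1^*$ and every $x\subseteq|\Gamma|$, $\gamma\in P_\Gamma^\bot(x)\Rightarrow a\in P_A(\langle\pi_1^*\rangle x)$, and symmetrically for $\pi_2^*$ with $A^\bot$ in place of $A$. To prove $\pi^*\in\Sd(\Gamma\Par\Delta)$, I take $(\gamma,\delta)\in\pi^*$ witnessed by some $a$, fix $x\subseteq|\Gamma|$ with $\gamma\in P_\Gamma^\bot(x)$, and must show $\delta\in P_\Delta(\langle\pi^*\rangle x)$. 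Setting $Z=\langle\pi_1^*\rangle x$, the first IH gives $a\in P_A(Z)$, i.e.\ $a\notin P_A^\bot(\overline Z)$. Choosing $y=\overline{\langle\pi^*\rangle x}$, a short relational check (from the definition of $\pi^*$ as composition through $a$) shows $\langle\pi_2^*\rangle y\subseteq\overline Z$, so by monotonicity of $P_A^\bot$, $a\notin P_A^\bot(\langle\pi_2^*\rangle y)$. The contrapositive of the second IH then delivers $\delta\in P_\Delta(\overline y)=P_\Delta(\langle\pi^*\rangle x)$, as required.

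The main obstacle is this cut case: the multiplicative and additive rules are largely bookkeeping driven by Lemmas~\ref{lem:tensor_par_rect} and~\ref{lem:seed-tensor}, but cut requires choreographing $P_A$ against its dual $P_A^\bot$ via complementation, monotonicity and the direct-image operator $\langle\_\rangle$. Getting the two IH conditions to meet through the cut formula $A$ is exactly where the duality baked into the definition of $P^\bot$ earns its keep.
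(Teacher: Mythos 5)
Your cut case is sound: taking $y=\compl{\langle\pi^*\rangle x}$, checking $\langle\pi_2^*\rangle y\subseteq\compl{Z}$ from the definition of the composite, and closing with monotonicity of $P_A^\bot$ and contraposition is a correct unfolding of what the paper does by observing that $\pi_2^{*\sim}\in\Sd(A\linearArrow\Delta)$ and composing the two instances of Lemma~\ref{lem:identity}; the two arguments are essentially the same computation, yours routed through complements instead of converse relations. The unit, $\bot$, $\Par$ and additive cases are also fine at the paper's own level of ``trivial computation''.

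The gap is in the tensor case \sit5, which is precisely one of the two cases the paper flags as non-trivial. Lemma~\ref{lem:seed-tensor}\sit{i} does give $\pi_1^*\times\pi_2^*\in\Sd\big((\Gamma\Par A)\tensor(\Delta\Par B)\big)$, but Lemma~\ref{lem:tensor_par_rect} cannot carry you from there to $\Sd\big(\Gamma\Par\Delta\Par(A\tensor B)\big)$. That lemma only compares $\tensor$ and $\Par$ of the \emph{same} pair of interfaces on rectangles: $\pi_1^*\times\pi_2^*$ is a rectangle for the split $(|\Gamma|\times|A|)\times(|\Delta|\times|B|)$, not for the split $(|\Gamma|\times|\Delta|)\times(|A|\times|B|)$ that the conclusion sequent requires, so the lemma is not even applicable in the form you need. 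Applying it with the split you do have only yields $\pi^*\in\Sd\big(\Gamma\Par A\Par\Delta\Par B\big)$, i.e.\ the seed condition for the sequent $\Gamma,A,\Delta,B$; since the tensor image is contained in the par image, the specification $\Gamma\Par\Delta\Par(A\tensor B)$ sits \emph{below} $\Gamma\Par\Delta\Par A\Par B$, so membership in the latter does not give membership in the former --- the comparison goes the wrong way. What is actually needed is a semantic form of linear distributivity, passing from $(\Gamma\Par A)\tensor(\Delta\Par B)$ to $\Gamma\Par\Delta\Par(A\tensor B)$, and this is exactly where the paper works: assume $\big(\gamma,\delta,(a,b)\big)\notin\Gamma\Par\Delta\Par(A\tensor B)(\pi_1^*\tensor\pi_2^*)$, extract a rectangle $u\times v\times r\subseteq\compl{\pi_1^*\tensor\pi_2^*}$, set $x=\langle\pi_1^*\rangle u$ and $y=\langle\pi_2^*\rangle v$, check $x\times y\subseteq\compl{r}$ and $u\times\compl{x}\subseteq\compl{\pi_1^*}$, and contradict the seed property of $\pi_1^*$ or $\pi_2^*$. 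That argument (or an equivalent one) is missing from your proof, and without it the tensor case --- hence the bifunctoriality of $\tensor$ used later in the paper --- is not established.
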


\begin{proof}
By induction on the structure of $\pi$: we will check that seeds propagate
through the above constructions. It is mostly trivial computation, except for
two interesting cases:

\smallbreak\parindent=0pt

\sit5: suppose that $\pi_1$ is a seed in $\Gamma\Par A$ and that $\pi_2$ is a
seed in $\Delta\Par B$. We need to show that $\pi_1\tensor\pi_2 =
\big\{\big(\gamma,\delta,(a,b)\big)\ |\ (\gamma,a)\in \pi_1 \land
(\delta,b)\in \pi_2\big\}$ is a seed in the sequent
$\Gamma\Par\Delta\Par(A\tensor B)$.

Let $\big(\gamma,\delta,(a,b)\big) \in \pi_1\tensor\pi_2$

\step{\Leftrightarrow}{}

$(\gamma,a)\in\pi_1$ and $(\delta,b)\in\pi_2$

\step{\Rightarrow }{$\pi_1$ and $\pi_2$ are seeds in $\Gamma,A$ and $\Delta,B$}

$(\gamma,a)\in \Gamma,A(\pi_1)$ and $(\delta,\pi_2)\in \Delta,B(\pi_2)$.

\smallbreak

By contradiction, let $\big(\gamma,\delta,(a,b)\big) \notin \Gamma,\Delta,A\tensor B (\pi_1\tensor\pi_2)$

\step{\Rightarrow}{}

$\big(\gamma,\delta,(a,b)\big) \in \Gamma^\bot\tensor\Delta^\bot\tensor(A\tensor B)^\bot
(\compl{\pi_1\tensor\pi_2})$

\step{\Rightarrow}{for some $u\times v\times r \subseteq
\compl{\pi_1\tensor\pi_2}$:}

$ \gamma\in\Gamma^\bot(u)\land \delta\in\Delta^\bot(v)\land \smash{\hbox{$\underbrace{(a,b)\in(A\tensor B)^\bot(r)}$}}$

\step{\Rightarrow}{}

$\dots \land \Big((\forall x\times y\subseteq \compl{r})\ a\in A^\bot(\compl{x}) \lor b\in
B^\bot(\compl{y})\Big)$.

\smallbreak

In particular, define $x=\langle \pi_1\rangle u$ and $y=\langle \pi_2\rangle
v$; it is easy to show that $x\times y\subseteq\compl{r}$, so that we have
$a\in A^\bot(\compl{x})$ or $b\in B^\bot(\compl{y})$.

Suppose $a\in A^\bot(\compl{x})$: we have $\gamma\in \Gamma^\bot(u)$ and
$u\times \compl{x}\subseteq \compl{\pi_1}$ (easy lemma); so by definition,
$(\gamma,a)\in \Gamma^\bot\tensor A^\bot(\compl{\pi_1})$, \ie
$(\gamma,a)\notin \Gamma,A(\pi_1)$! This is a contradiction.

Similarly, one can derive a contradiction from $b\in B^\bot(\compl{y})$.

This finishes the proof that $\pi_1\tensor\pi_2$ is a seed of
$\Gamma,\Delta,A\tensor B$.

\smallbreak
\sit9: let $\pi_1$ be a seed in $\Gamma,A = \Gamma^\bot \linearArrow A$ and
$\pi_2$ a seed in $\Delta,A^\bot$, \ie $\pi_2^\sim$ is a seed in $A
\linearArrow \Delta$. Let's show that $\pi=\big\{(\gamma,\delta)\ |\ (\exists
a)\ (\gamma,a)\in\pi_1 \land (\delta,a)\in\pi_2\big\} = \pi_2^\sim \cdot
\pi_1$ is a seed in $\Gamma,\Delta$.

Suppose $(\gamma,\delta) \in \pi_2^\sim \cdot \pi_1$, \ie that
$(\gamma,a)\in\pi_1$ and $(a,\delta)\in\pi_2^\sim$ for some $a$. We will prove
that $(\gamma,\delta)$ is in $\Gamma,\Delta(\pi) = \Gamma^\bot \linearArrow
\Delta(\pi)$.  According to Lemma~\ref{lem:identity}, we need to show
that if $\gamma\in\Gamma^\bot(u)$ then $\delta\in\Delta(\langle \pi\rangle
u)$.

Let $\gamma\in\Gamma^\bot(u)$

\step{\Rightarrow}{$(\gamma,a)\in\pi_1\subseteq \Gamma^\bot \linearArrow A(\pi_1)$}

$a\in A\big(\langle \pi_1\rangle u\big)$

\step{\Rightarrow}{$(a,\delta)\in\pi_2^\sim \subseteq
A\linearArrow\Delta(\pi_2^\sim)$}

$\delta \in \Delta\big(\langle\pi_2^\sim\rangle\langle\pi_1\rangle u\big)$

\step{\Leftrightarrow}{}

$\delta \in \Delta\big(\langle\pi\rangle u\big)$.
\qed
\end{proof}
%%%>>>1

%%%%%%%%%%%%%%%%%%%%%%%%%%%%%%%%%%%%%%%%%%%%%%%%%%%%%%%%%%%%%%%%%%%%%%%%%%%%%%
\section{Morphisms, Categorical Structure} %%%<<<1
\label{section:category}

To complete the formal definition of a category of interfaces, we need to
define morphisms between interfaces. This is done in the usual way:

\begin{definition}
  A linear arrow from $X$ to $Y$ is a seed in $X \linearArrow Y$.
\end{definition}
Here is a nicer characterization of linear arrows from $X$ to $Y$:
\begin{lemma} \label{lem:sim}
  $r\in\Sd(X \linearArrow Y)$ iff $\langle r\rangle \big(P_X(x)\big)\subseteq
  P_Y\big(\langle r\rangle (x)\big)$ for all $x\subseteq|X|$.
\end{lemma}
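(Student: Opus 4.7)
The plan is to unfold the definition of ``seed'' and apply Lemma~\ref{lem:identity}, then rewrite the resulting quantifier arrangement in terms of direct images. By definition, $r \in \Sd(X \linearArrow Y)$ means $r \subseteq (X \linearArrow Y)(r)$, that is, every pair $(a,b)\in r$ satisfies $(a,b)\in (X\linearArrow Y)(r)$.

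First I would invoke Lemma~\ref{lem:identity} to rewrite the condition $(a,b)\in (X\linearArrow Y)(r)$: it is equivalent to the universal statement ``for all $x\subseteq |X|$, $a\in P_X(x) \Rightarrow b\in P_Y(\langle r\rangle x)$''. So being a seed is equivalent to
\[
\forall (a,b)\in r,\ \forall x\subseteq |X|,\quad a\in P_X(x) \Rightarrow b\in P_Y(\langle r\rangle x).
\]
Next I would swap the order of the two universal quantifiers; this is harmless and gives the equivalent form: for all $x\subseteq |X|$ and all $(a,b)\in r$, if $a\in P_X(x)$ then $b\in P_Y(\langle r\rangle x)$.

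The final step is to recognize the inner statement as a subset inclusion. The set of those $b$ for which there exists $a$ with $(a,b)\in r$ and $a\in P_X(x)$ is, by definition, $\langle r\rangle(P_X(x))$. So the condition becomes: for all $x\subseteq |X|$, every $b\in \langle r\rangle(P_X(x))$ lies in $P_Y(\langle r\rangle x)$, which is exactly $\langle r\rangle(P_X(x)) \subseteq P_Y(\langle r\rangle x)$.

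I do not anticipate any real obstacle: the proof is a pure chain of logical equivalences relying on Lemma~\ref{lem:identity} and the definition of the direct image. The only point requiring a bit of care is the swap of an existential (hidden in $\langle r\rangle(P_X(x))$) with the universal over $x$, which is handled simply by the equivalence between ``$\forall (a,b)\in r,\ \phi(a,b)$'' and ``$\forall b,\ (\exists a,(a,b)\in r) \Rightarrow \phi(a,b)$'' when $\phi$ has the appropriate form.
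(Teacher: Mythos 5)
Your proof is correct and follows essentially the same route as the paper's: both unfold the definition of a seed, apply the characterization of $(a,b)\in X\linearArrow Y(r)$ from Lemma~\ref{lem:identity}, and identify the resulting condition with the inclusion $\langle r\rangle(P_X(x))\subseteq P_Y(\langle r\rangle x)$. The paper merely presents the two implications separately by element-chasing, whereas you phrase it as a single chain of equivalences with the (harmless) quantifier rearrangement made explicit.
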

\begin{proof} \parindent0pt
Suppose $r$ is a seed in $X \linearArrow Y$, let $b\in\langle r\rangle P_X(x)$

\step{\Rightarrow}{}

there is some $a$ s.t. $(a,b)\in r$ and $a\in P_X(x)$

\step{\Rightarrow}{$r$ is a seed in $X\linearArrow Y$}

$(a,b)\in P_X\linearArrow P_Y(r)$

\step{\Rightarrow}{definition of $\linearArrow$}

$b\in P_Y(\langle r\rangle x)$.

\smallbreak
Conversely, suppose $\langle r\rangle P_X(x) \subseteq P_Y\langle r\rangle
(x)$; let $(a,b)\in r$, and $a\in P_X(x)$. We have $b\in\langle r\rangle
P_X(x)$, and by hypothesis, $b\in P_Y(\langle r\rangle x)$.
\qed
\end{proof}

\begin{lemma}
  If $r\in\Sd(X \linearArrow Y)$ and $r'\in\Sd(Y \linearArrow Z)$ then
  $r'\cdot r\in\Sd(X \linearArrow Z)$.
\end{lemma}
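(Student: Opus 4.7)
The plan is to apply the characterization of Lemma \ref{lem:sim} to both hypotheses and to the goal, reducing everything to a short chain of inclusions on direct images. Concretely, using Lemma \ref{lem:sim}, the hypotheses read
\[
  \langle r\rangle P_X(x) \subseteq P_Y(\langle r\rangle x) \quad\text{and}\quad \langle r'\rangle P_Y(y) \subseteq P_Z(\langle r'\rangle y)
\]
for all $x\subseteq|X|$ and $y\subseteq|Y|$; the goal, by the same lemma, is $\langle r'\cdot r\rangle P_X(x) \subseteq P_Z(\langle r'\cdot r\rangle x)$ for all $x\subseteq|X|$.

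The first small observation I would record (or simply invoke as standard) is the functoriality of direct images: $\langle r'\cdot r\rangle = \langle r'\rangle\circ\langle r\rangle$. I would also note that $\langle r'\rangle$ is a monotonic predicate transformer --- immediate from the definition of direct image.

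Given this, the proof collapses to a two-step chase. Starting from $\langle r'\cdot r\rangle P_X(x) = \langle r'\rangle\langle r\rangle P_X(x)$, apply the first hypothesis under $\langle r'\rangle$ (using monotonicity of $\langle r'\rangle$) to get $\langle r'\rangle\langle r\rangle P_X(x) \subseteq \langle r'\rangle P_Y(\langle r\rangle x)$; then apply the second hypothesis with $y = \langle r\rangle x$ to get $\langle r'\rangle P_Y(\langle r\rangle x) \subseteq P_Z(\langle r'\rangle\langle r\rangle x) = P_Z(\langle r'\cdot r\rangle x)$. Conclude by Lemma \ref{lem:sim} that $r'\cdot r\in\Sd(X\linearArrow Z)$.

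There is essentially no obstacle here: once Lemma \ref{lem:sim} is in hand, the only thing to check is the compatibility of $\langle\_\rangle$ with composition and its monotonicity, both of which are routine. The only thing worth being careful about is applying the hypothesis for $r'$ at the argument $\langle r\rangle x$ (rather than at $x$), which is exactly the right quantification pattern in Lemma \ref{lem:sim}.
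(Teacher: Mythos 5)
Your proof is correct and is essentially the paper's own argument: the paper simply remarks that the lemma is ``a simple corollary to Lemma~\ref{lem:sim}'' (or the essence of the cut case in Proposition~\ref{prop:sound-MALL}), and your inclusion chase via $\langle r'\cdot r\rangle=\langle r'\rangle\circ\langle r\rangle$ and monotonicity is exactly that corollary spelled out.
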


\begin{proof}
This is the essence of point~\sit9 from Proposition~\ref{prop:sound-MALL}; or a
simple corollary to Lemma~\ref{lem:sim}.
\qed
\end{proof}

\noindent
Taken together with Lemma~\ref{lem:identity}, this makes interfaces
into a category:
\begin{definition}
  We write $\Int$ for the category with interfaces as objects and
  linear arrows as morphisms.
\end{definition}

This category is an enrichment of the usual category $\Rel$. The construction
can be summarized in the following way:

\begin{lemma}
$\Int$ is obtained by lifting $\Rel$ through the following specification
structure (\cite{specstruct}):
\begin{itemize}
\item if $X$ is a set, $\Prop_X\equiv\Pow(X)\to\Pow(X)$;
\item if $r\subseteq X\times Y$, $P\in\Prop_X$ and
$Q\in\Prop_Y$, then $P\{r\}Q$ iff $\langle r\rangle\cdot P\subseteq
Q\cdot\langle r\rangle$.
\end{itemize}
\end{lemma}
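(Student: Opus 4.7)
The plan is to verify that the stated data satisfies the two axioms of a specification structure on $\Rel$, and then to observe that the resulting lifted category coincides with $\Int$ on the nose.

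First, I would check the identity axiom $P\{\Id_X\}P$: since $\langle \Id_X\rangle$ is the identity on $\Pow(X)$, we have $\langle\Id_X\rangle\cdot P = P = P\cdot\langle\Id_X\rangle$, so the axiom holds trivially. For the composition axiom, assume $P\{r\}Q$ and $Q\{r'\}R$. Using the standard functoriality $\langle r'\cdot r\rangle = \langle r'\rangle\cdot\langle r\rangle$ of the direct image together with monotonicity of $\langle r'\rangle$, one computes
\[
\langle r'\cdot r\rangle\cdot P \;=\; \langle r'\rangle\cdot\langle r\rangle\cdot P \;\subseteq\; \langle r'\rangle\cdot Q\cdot\langle r\rangle \;\subseteq\; R\cdot\langle r'\rangle\cdot\langle r\rangle \;=\; R\cdot\langle r'\cdot r\rangle,
\]
yielding $P\{r'\cdot r\}R$.

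Second, I would match the lifted category with $\Int$ at the level of objects and morphisms. By definition, an object of the lifted category is a pair $(X,P)$ where $X$ is a set and $P$ a monotonic predicate transformer on $X$, which is exactly an interface. A morphism from $(X,P_X)$ to $(Y,P_Y)$ is a relation $r\subseteq X\times Y$ satisfying $\langle r\rangle\cdot P_X\subseteq P_Y\cdot\langle r\rangle$, which by Lemma~\ref{lem:sim} is precisely the condition $r\in\Sd(X\linearArrow Y)$, i.e.\ the defining condition for a linear arrow in $\Int$. Identities and composition in the two categories then agree by Lemma~\ref{lem:identity} and the composition lemma immediately preceding the statement.

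The whole result is essentially a repackaging of Lemma~\ref{lem:sim} in the vocabulary of specification structures, so there is no real obstacle. The only non-trivial ingredient is the functoriality of the direct image used in the composition check; once that is in place, everything else is just unpacking definitions.
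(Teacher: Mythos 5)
Your proposal is correct, and it follows exactly the route the paper intends: the paper states this lemma without a proof, as a repackaging of Lemma~\ref{lem:sim} together with the identity lemma (Lemma~\ref{lem:identity}) and the composition lemma, which are precisely the ingredients you invoke. Your explicit verification of the two specification-structure axioms (via $\langle\Id_X\rangle=\Id_{\Pow(X)}$ and $\langle r'\cdot r\rangle=\langle r'\rangle\cdot\langle r\rangle$ with monotonicity of direct images) correctly fills in the details the paper leaves implicit.
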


\noindent
Let's now turn our attention to the structure of this category:
\begin{lemma}
  In $\Int$, $\top$ is terminal and $\with$ is the cartesian product.
\end{lemma}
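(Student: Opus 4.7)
The plan is to handle the two claims separately, using Lemma~\ref{lem:sim} as the main tool for checking that the candidate morphisms are seeds.

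For $\top$ being terminal: recall $\top = \bzero^\bot$ with $|\top| = \emptyset$. For any interface $X$, a morphism $X \to \top$ is a seed in $X \linearArrow \top$, whose state space is $|X| \times |\top| = \emptyset$. The only candidate is $\emptyset$ itself, and $\emptyset \subseteq P(\emptyset)$ trivially. So there is a unique morphism, given (as a relation) by the empty set.

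For $\with$ being the cartesian product: take $|X \with Y| = |X| + |Y|$ and identify subsets of $|X| + |Y|$ with pairs $(x,y)$ with $x \subseteq |X|$, $y \subseteq |Y|$. Define the projections as the evident partial identities
\[
  \pi_X = \{((1,a),a) \mid a \in |X|\} \quad \text{and} \quad \pi_Y = \{((2,b),b) \mid b \in |Y|\}.
\]
I would check via Lemma~\ref{lem:sim} that $\pi_X$ is a seed in $X\with Y \linearArrow X$: $\langle\pi_X\rangle(x,y) = x$, so $\langle\pi_X\rangle P_{X\with Y}(x,y) = \langle\pi_X\rangle(P_X(x),P_Y(y)) = P_X(x) = P_X(\langle\pi_X\rangle(x,y))$, and we even get equality. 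Symmetrically for $\pi_Y$.

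Given $f \in \Sd(Z \linearArrow X)$ and $g \in \Sd(Z \linearArrow Y)$, define the pairing
\[
  \langle f,g\rangle = \{(z,(1,a)) \mid (z,a)\in f\} \cup \{(z,(2,b)) \mid (z,b)\in g\} \ \hbox{.}
\]
Then $\langle\langle f,g\rangle\rangle w = (\langle f\rangle w,\ \langle g\rangle w)$, and using Lemma~\ref{lem:sim} on $f$ and $g$ componentwise gives
\[
  \langle\langle f,g\rangle\rangle P_Z(w) \subseteq (P_X(\langle f\rangle w),\ P_Y(\langle g\rangle w)) = P_{X\with Y}\big(\langle\langle f,g\rangle\rangle w\big) \ \hbox{,}
\]
so $\langle f,g\rangle$ is a seed in $Z\linearArrow X\with Y$. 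The equations $\pi_X \cdot \langle f,g\rangle = f$ and $\pi_Y \cdot \langle f,g\rangle = g$ are immediate from the definitions of relational composition and of the two projections. For uniqueness, any $h \in \Sd(Z\linearArrow X\with Y)$ decomposes canonically as $h = h_1 \cup h_2$ along the coproduct $|X|+|Y|$; the equations $\pi_X \cdot h = f$ and $\pi_Y \cdot h = g$ then force $h_1 = f$ and $h_2 = g$, so $h = \langle f,g\rangle$.

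No step is really hard: the seed conditions reduce, via Lemma~\ref{lem:sim} and the fact that $P_X\with P_Y$ acts componentwise, to the seed conditions already assumed for $f$ and $g$. The only mildly delicate point is being careful with the identification $\Pow(|X|+|Y|) \simeq \Pow(|X|)\times\Pow(|Y|)$ so that compositions with $\pi_X$, $\pi_Y$ really extract the two halves of a relation into $|X|+|Y|$.
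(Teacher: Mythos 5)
Your proof is correct and is just the direct verification that the paper dismisses as ``immediate'': the empty relation is the unique seed in $X \linearArrow \top$ since its state space is empty, and the projections, pairing, and uniqueness for $\with$ all reduce via Lemma~\ref{lem:sim} to the componentwise action of $P_X \with P_Y$. Nothing to add; the checks of the equations $\pi_X \cdot \langle f,g\rangle = f$, $\pi_Y \cdot \langle f,g\rangle = g$ and of uniqueness via the decomposition $h = h_1 \cup h_2$ are exactly what is needed.
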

\begin{proof}This is immediate. \qed
\end{proof}

\begin{lemma}
  $\_^\bot$ is an involutive contravariant functor.
\end{lemma}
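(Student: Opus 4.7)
My plan is to define the functor on objects as the dual $X^\bot$ already given in the text, and on a morphism $r \in \Sd(X \linearArrow Y)$ by $r^\bot := r^\sim$, the converse relation, viewed as a subset of $|Y^\bot| \times |X^\bot| = |Y| \times |X|$. Four things then need to be checked: involutivity on objects, that $r^\sim$ really is a morphism $Y^\bot \to X^\bot$, contravariant functoriality, and involutivity on morphisms. For object involutivity I simply unfold: $P_{X^{\bot\bot}}(x) = \compl{P_{X^\bot}(\compl{x})} = \compl{\compl{P_X(\compl{\compl{x}})}} = P_X(x)$, which is pure double complementation and requires no structure.

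The heart of the proof is showing that $r \in \Sd(X \linearArrow Y)$ implies $r^\sim \in \Sd(Y^\bot \linearArrow X^\bot)$. I would use Lemma~\ref{lem:sim} on both sides: the first condition reads $\langle r\rangle P_X(x) \subseteq P_Y\langle r\rangle x$ for all $x \subseteq |X|$, and the target condition reads $\langle r^\sim\rangle P_Y^\bot(y) \subseteq P_X^\bot \langle r^\sim\rangle y$ for all $y \subseteq |Y|$. To bridge the two I would expand the duals, take complements, and use the elementary identity $\compl{\langle r^\sim\rangle \compl{z}} = \{a \mid \langle r\rangle\{a\} \subseteq z\}$, which is just the de~Morgan rewriting of ``direct image'' as ``universal image of $r^\sim$''. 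Concretely: given the first condition and any $a \in P_X\bigl(\compl{\langle r^\sim\rangle y}\bigr)$, set $x := \compl{\langle r^\sim\rangle y}$; by definition $\langle r\rangle x \subseteq \compl{y}$, so by monotonicity $\langle r\rangle P_X(x) \subseteq P_Y(\compl{y})$. Chasing the complements, this is exactly $a \notin \langle r^\sim\rangle \compl{P_Y(\compl{y})} = \langle r^\sim\rangle P_Y^\bot(y)$ once one excludes the case needed to conclude, which is the target condition. The converse direction is obtained symmetrically, swapping the roles of $X$ and $Y^\bot$.

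Functoriality is now standard from the properties of relational converse: $\Id^\sim = \Id$ and $(r' \cdot r)^\sim = r^\sim \cdot r'^\sim$, which gives contravariance for free. Involutivity on morphisms reduces to $(r^\sim)^\sim = r$, and this matches the object-level involutivity in the sense that the functor composed with itself is equal (not just isomorphic) to the identity.

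The main obstacle is bookkeeping: Lemma~\ref{lem:sim} together with the definition of $\_^\bot$ produces conditions riddled with complements and direct images, and one has to be careful that each implication used respects monotonicity. Nothing here is deep, but it is easy to slip a complement; once the identity $\compl{\langle r^\sim\rangle \compl{z}} = [r](z)$ is in hand, the calculation is short and the rest of the statement is essentially free.
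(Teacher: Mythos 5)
Your proposal is correct and follows essentially the same route as the paper: the functor acts on morphisms by relational converse, and the key step is exactly the paper's, namely using Lemma~\ref{lem:sim} together with complementation and the specialization $x:=\compl{\langle r^\sim\rangle y}$ (via the inequality $\langle r\rangle\compl{\langle r^\sim\rangle y}\subseteq\compl{y}$ and monotonicity) to show that $r\in\Sd(X\linearArrow Y)$ gives $r^\sim\in\Sd(Y^\bot\linearArrow X^\bot)$. Your explicit checks of $\Id^\sim=\Id$, $(r'\cdot r)^\sim=r^\sim\cdot r'^{\sim}$ and $(r^\sim)^\sim=r$ are the same trivial bookkeeping the paper leaves implicit.
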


\begin{proof}
Involutivity is trivial; contravariance is only slightly trickier:
\parindent=0pt

$r$ is a seed in $A \linearArrow B$

\step{\Leftrightarrow}{Lemma \ref{lem:sim}}

$\forall x\ \langle r\rangle A(x) \subseteq B\langle r\rangle x$

\step{\Leftrightarrow}{}

$\forall x\ \compl{B\langle r\rangle x} \subseteq \compl{\langle r\rangle
A(x)}$

\step{\Leftrightarrow}{lemma: $y\subseteq \compl{\langle r\rangle x}$ iff
$\langle r^\sim\rangle y \subseteq \compl{x}$}

$\forall x\ \langle r^\sim\rangle \compl{B\langle r\rangle x} \subseteq \compl{A(x)}$

\step{\Rightarrow}{in particular, for $x$ of the form $\compl{\langle r^\sim\rangle
x}$; we have $\compl{x} \subseteq \langle r\rangle\compl{\langle r^\sim\rangle
x}$
(lemma)}

$\forall x\ \langle r^\sim\rangle B^\bot (x) \subseteq A^\bot\big(\langle
r^\sim\rangle x\big)$

\ie $r^\sim$ is a seed in $B^\bot \linearArrow A^\bot$.
The action of $\_^\bot$ on morphisms is just $\_^\sim$.
\qed
\end{proof}

\begin{corollary}
  $\Int$ is autodual through $\_^\bot$; $\bzero$ is initial; and $\plus$ is
  the coproduct.
\end{corollary}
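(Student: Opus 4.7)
The plan is to invoke the general principle that a contravariant involutive functor automatically produces an equivalence with the opposite category, and then to transport the previously established facts (that $\top$ is terminal and $\with$ is the cartesian product) across this duality. No step should require a direct calculation on seeds; everything follows formally from the two preceding lemmas.

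For autoduality, the previous lemma exhibits $\_^\bot$ as a contravariant functor $\Int \to \Int$ which is its own inverse. Viewed as a functor $\Int \to \Int^{op}$, involutivity immediately makes it an isomorphism of categories, which is precisely what autoduality through $\_^\bot$ means. It is then a standard categorical fact that such an equivalence exchanges initial and terminal objects, and exchanges products with coproducts.

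To spell out initiality of $\bzero$: given any interface $X$, terminality of $\top$ provides a unique morphism $X^\bot \to \top$; applying $\_^\bot$ and using involutivity together with $\top = \bzero^\bot$ gives a morphism $\bzero \to X$, and uniqueness is preserved because $\_^\bot$ is a bijection on hom-sets. For the coproduct structure of $\plus$, unfold the definition $X\plus Y = (X^\bot \with Y^\bot)^\bot$ and chain the bijections: morphisms $X\plus Y \to Z$ correspond by $\_^\bot$ to morphisms $Z^\bot \to X^\bot \with Y^\bot$, which by the universal property of the product $\with$ correspond to pairs of morphisms $Z^\bot \to X^\bot$ and $Z^\bot \to Y^\bot$, which in turn correspond by $\_^\bot$ to pairs $X \to Z$ and $Y \to Z$. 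The coproduct injections $X \to X\plus Y$ and $Y \to X\plus Y$ are then defined as the duals of the two projections from $X^\bot \with Y^\bot$. There is no real obstacle here: the only thing to be checked is the naturality of these bijections in $Z$, which is automatic from the functoriality of $\_^\bot$ and of the product $\with$.
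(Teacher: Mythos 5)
Your proof is correct and is exactly the argument the paper intends: the corollary is stated without proof precisely because it follows by transporting terminality of $\top$ and the product structure of $\with$ across the involutive contravariant functor $\_^\bot$, using $\top=\bzero^\bot$ and $X\plus Y=(X^\bot\with Y^\bot)^\bot$. Your explicit hom-set bijections just spell out this duality transport in detail.
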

It is now easy to see that linear arrows transform seeds into seeds,
and, in the other direction, antiseeds into antiseeds:
\begin{proposition}
  Suppose $r$ is a linear arrow from $X$ to $Y$:
  \begin{itemize}
  \sitem{i} $\langle r\rangle $ is a sup-lattice morphism from $\Sd(X)$ to $\Sd(Y)$;
  \sitem{ii} $\langle r^\sim\rangle $ is a sup-lattice morphism from $\Sd(Y^\bot)$ to
    $\Sd(X^\bot)$.
    \end{itemize}
\end{proposition}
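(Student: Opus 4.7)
The plan is to prove (i) directly using Lemma~\ref{lem:sim} and the elementary fact that direct images preserve arbitrary unions, and then to obtain (ii) by duality from the functoriality of $\_^\bot$.

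For (i), I would split the statement into two claims. First, that $\langle r\rangle$ lands in $\Sd(Y)$: given $x\in\Sd(X)$, i.e.\ $x\subseteq P_X(x)$, monotonicity of $\langle r\rangle$ gives $\langle r\rangle x \subseteq \langle r\rangle P_X(x)$, and Lemma~\ref{lem:sim} applied to the assumption $r\in\Sd(X\linearArrow Y)$ gives $\langle r\rangle P_X(x)\subseteq P_Y(\langle r\rangle x)$. Chaining the two inclusions yields $\langle r\rangle x \subseteq P_Y(\langle r\rangle x)$, so $\langle r\rangle x\in\Sd(Y)$. Second, that $\langle r\rangle$ preserves arbitrary sups: this is the elementary identity $\langle r\rangle\big(\bigcup_i x_i\big) = \bigcup_i \langle r\rangle x_i$, which follows immediately by unfolding the definition of direct image (an element $b$ lies on the left iff some $a\in\bigcup_i x_i$ is $r$-related to it, iff some $a\in x_i$ is, for some $i$). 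Since sups in $\Sd(X)$ and $\Sd(Y)$ are computed as ordinary unions (by the lemma early in the paper), this suffices.

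For (ii), I would simply invoke the preceding lemma stating that $\_^\bot$ is an involutive contravariant functor whose action on morphisms is $\_^\sim$. Hence $r^\sim$ is a linear arrow from $Y^\bot$ to $X^\bot$, and part~(i) applied to $r^\sim$ gives exactly that $\langle r^\sim\rangle$ is a sup-lattice morphism from $\Sd(Y^\bot)$ to $\Sd(X^\bot)$.

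The whole argument is essentially a rearrangement of facts already proved; no step seems genuinely hard. The only point requiring mild care is making sure that the sup-lattice structure on $\Sd(X)$ really is given by union, so that ``preserving unions'' and ``being a sup-lattice morphism'' coincide; but this is already established. Thus I expect the proof to be very short, perhaps one sentence for each of (i)(a), (i)(b) and (ii).
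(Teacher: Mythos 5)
Your proposal is correct and follows essentially the same route as the paper: the paper also derives $\langle r\rangle x\subseteq P_Y(\langle r\rangle x)$ from the seed property of $r$ (arguing elementwise via the unfolded characterization of $\linearArrow$, which is equivalent to your use of Lemma~\ref{lem:sim}), notes that commutation with sups is just preservation of unions, and obtains (ii) from $r^\sim\in\Sd(Y^\bot\linearArrow X^\bot)$, i.e.\ the contravariance of $\_^\bot$. No gaps.
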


\begin{proof}
Let $r\in\Sd(X \linearArrow Y)$ and $x\subseteq X(x)$; we
want to show that $\langle r\rangle x \subseteq Y(\langle r\rangle x)$.
\parindent=0pt

Let $b\in\langle r\rangle x$

\step{\Leftrightarrow}{}

$(\exists a)\ (a,b)\in r \land a\in x$

\step{\Rightarrow}{$r$ is a seed in $X \linearArrow Y$}

$(\exists a)\ (a,b)\in X \linearArrow Y(r) \land a\in x$

\step{\Rightarrow}{definition of $X \linearArrow Y$ with the fact that $\langle r\rangle x\subseteq \langle r\rangle x$}

$b\in Y(\langle r\rangle x)$.

Showing that $\langle r\rangle $ commutes with sups is immediate: it commutes
with arbitrary unions, even when the argument is not a seed.

The second point follows because $r^\sim\in\Sd(Y^\bot \linearArrow X^\bot)$.
\qed
\end{proof}

\begin{lemma}
  $\tensor$ [\/$\Par$] is a  categorical tensor product with neutral
  element $\bone$ [$\bot$].
\end{lemma}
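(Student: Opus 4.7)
The plan is to establish the symmetric monoidal structure $(\Int,\tensor,\bone)$ by lifting the obvious one on $\Rel$. At the level of state spaces, $\tensor$ is just cartesian product and $\bone$ has state space a singleton, so every candidate structural map (associator, unitors, symmetry) is the graph of a bijection between state spaces. The coherence diagrams therefore reduce to equalities of relations that already hold in $\Rel$; the only real content is checking that these bijections, and the binary action of $\tensor$ on morphisms, land in $\Sd(\_\linearArrow\_)$.

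First I would check functoriality. Given $r\in\Sd(X\linearArrow X')$ and $s\in\Sd(Y\linearArrow Y')$, take $r\tensor s = \{((a,b),(a',b'))\mid (a,a')\in r,\ (b,b')\in s\}$. To show $r\tensor s\in\Sd((X\tensor Y)\linearArrow(X'\tensor Y'))$ I use Lemma~\ref{lem:sim}: fix $z\subseteq|X|\times|Y|$ and $(a',b')\in\langle r\tensor s\rangle\big(P_X\tensor P_Y(z)\big)$. Unfolding, there are witnesses $(a,b)\in P_X\tensor P_Y(z)$ with $(a,a')\in r$, $(b,b')\in s$, and a rectangle $x\times y\subseteq z$ with $a\in P_X(x)$, $b\in P_Y(y)$. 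Applying Lemma~\ref{lem:sim} to $r$ and $s$ separately gives $a'\in P_{X'}(\langle r\rangle x)$ and $b'\in P_{Y'}(\langle s\rangle y)$; since $\langle r\rangle x\times\langle s\rangle y\subseteq\langle r\tensor s\rangle z$, the definition of $\tensor$ yields $(a',b')\in P_{X'}\tensor P_{Y'}(\langle r\tensor s\rangle z)$. Preservation of identities and composition on morphisms is then inherited from $\Rel$, since $(r'\tensor s')\cdot(r\tensor s)=(r'\cdot r)\tensor(s'\cdot s)$ as relations.

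Next I construct the natural isomorphisms. The bijection $((a,b),c)\mapsto(a,(b,c))$ gives the associator, $(a,b)\mapsto(b,a)$ the symmetry, and $(*,a)\mapsto a$ the left unitor (right unitor symmetric); each is a graph of a bijection, so its converse is the candidate inverse. For each one I verify the seed condition via Lemma~\ref{lem:sim}. The left unitor is the cleanest case: the only subsets of $\{*\}\times|X|$ are of the form $\{*\}\times z$, and $P_\bone\tensor P_X(\{*\}\times z)=\{*\}\times P_X(z)$, which matches $P_X(z)$ under the bijection; commutation is then an equality. The associator and symmetry are similar, using only that $P\tensor Q$ is defined as a sup over rectangles (so re-bracketing or swapping rectangles leaves the definition invariant). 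The coherence diagrams (pentagon, triangle, hexagon) commute because they commute on the underlying bijections of state spaces.

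The main obstacle—really the only one—is the functoriality check above; once $\tensor$ is known to act on morphisms, everything else is structural bookkeeping inherited from $\Set$ via $\Set\hookrightarrow\Rel\hookrightarrow\Int$. For the parenthetical statement about $\Par$, I would not redo any work: since $\_^\bot$ is an involutive contravariant functor and $X\Par Y=(X^\bot\tensor Y^\bot)^\bot$, $\bot=\bone^\bot$, transporting the monoidal data for $(\Int,\tensor,\bone)$ along $\_^\bot$ gives monoidal data for $(\Int,\Par,\bot)$ directly, with coherence preserved.
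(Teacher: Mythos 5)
Your proposal is correct and follows essentially the same route as the paper: the only real content is the bifunctoriality of $\tensor$, the structural isomorphisms are graphs of bijections whose seed conditions and coherence are inherited from $\Rel$, and $\Par$ with unit $\bot$ is obtained by transporting along the involutive contravariant functor $\_^\bot$. The sole (minor) difference is that you re-prove bifunctoriality directly from Lemma~\ref{lem:sim}, whereas the paper simply cites point~(5) of Proposition~\ref{prop:sound-MALL}; the two are the same computation in substance.
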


\begin{proof}
We need to show the bifunctoriality of $\tensor$. This was actually proved in
the previous section (Proposition~\ref{prop:sound-MALL}, point~\sit5). The
bifunctoriality of $\Par$ follows by duality; and the rest is immediate.
\qed
\end{proof}
As a summary of this whole section, we have:
\begin{proposition}
$\Int$ is a $*$-autonomous category.
(In particular, $\Int$ is symmetric monoidal closed.)
\end{proposition}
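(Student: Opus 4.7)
The plan is to assemble the pieces already proven. From the preceding lemma, $(\Int,\tensor,\bone)$ is monoidal: bifunctoriality of $\tensor$ was established in point~\sit5 of Proposition~\ref{prop:sound-MALL}, and the associator, symmetry, and unitors can be taken to be the set-theoretic bijections on state spaces, which I would verify are linear arrows by combining Lemma~\ref{lem:tensor_par_rect} (the image of a rectangle under $P_X\tensor P_Y$ is the product of images) with Lemma~\ref{lem:sim}. I already have that $\_^\bot$ is an involutive contravariant functor, and that $X\linearArrow Y = (X\tensor Y^\bot)^\bot$ by definition.

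The remaining ingredient is the closed structure, i.e.\ the natural bijection
\[
  \Sd\big((X\tensor Y)\linearArrow Z\big)\;\simeq\;\Sd\big(X\linearArrow (Y\linearArrow Z)\big)\ \hbox{.}
\]
My approach is to unfold both sides using duality. The left side rewrites as $\Sd\big(((X\tensor Y)\tensor Z^\bot)^\bot\big)$ and the right side as $\Sd\big((X\tensor(Y\tensor Z^\bot))^\bot\big)$, so the bijection is induced by the monoidal associator of $\tensor$ at $X$, $Y$, $Z^\bot$, transported through the involutive functor $\_^\bot$. Naturality in $X$, $Y$, $Z$ is inherited from naturality of the associator on the underlying sets.

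The main obstacle will be the bureaucratic but necessary verification that the candidate structural isomorphisms really satisfy the seed condition and that the coherence diagrams (pentagon, triangle, hexagons) commute in $\Int$. Both steps are routine: the seed condition reduces via Lemma~\ref{lem:tensor_par_rect} to set-theoretic identities that hold by construction, and coherence transports from $\Set$ to $\Int$ because the underlying bijections already satisfy the diagrams and $\langle\_\rangle$ is functorial from $\Rel$ to $\POW$. Once symmetric monoidal closedness is in place, $*$-autonomy follows immediately: $\bot$ is a dualizing object because $X\linearArrow\bot = X^\bot\Par\bot$ is isomorphic to $X^\bot$ (using that $\bot$ is a unit for $\Par$, dual to $\bone$ being a unit for $\tensor$), so the canonical map $X\to (X\linearArrow\bot)\linearArrow\bot$ is just the identity $X\to X^{\bot\bot}$, which is an isomorphism by the previously established involutivity of $\_^\bot$.
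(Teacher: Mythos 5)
Your proposal is correct, and it reaches $*$-autonomy by the dualizing-object route: you build the symmetric monoidal closed structure explicitly (structural bijections as linear arrows, currying via the bijection $\Sd\big((X\tensor Y)\linearArrow Z\big)\simeq\Sd\big(X\linearArrow(Y\linearArrow Z)\big)$ obtained from associativity of $\tensor$ and involutivity of $\_^\bot$), and then check that $\bot$ is dualizing because $X\linearArrow\bot\simeq X^\bot$ and the canonical map is the identity $X\to X^{\bot\bot}=X$. The paper instead takes the monoidal and duality ingredients from the preceding lemmas as already settled and reduces the whole proposition to the compatibility of the contravariant functor $\_^\bot$ with the internal hom, namely the commutation of the diagram comparing $\_^\bot\circ\_^\bot$ on $X\linearArrow Y$ with conjugation by the double-dual isomorphism $d$; this is immediate because in $\Int$ everything is strict: $X^{\bot\bot}=X$ on the nose (so $d=\Id$) and $\_^\bot$ acts on morphisms as $\_^\sim$. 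So the two arguments verify equivalent formulations of $*$-autonomy, and both ultimately rest on the same strictness; your version makes the closed structure explicit (which the paper leaves implicit in the phrase ``trivial equalities''), at the cost of more bureaucracy, while the paper's version isolates the only diagram that is specific to the duality. Two small points to tighten: Lemma~\ref{lem:tensor_par_rect} (images of rectangles) is not really the tool that gives associativity of $\tensor$ or the seed-hood of the structural maps --- what you need is the direct computation that both $(P_X\tensor P_Y)\tensor P_Z$ and $P_X\tensor(P_Y\tensor P_Z)$ send $s$ to $\bigcup_{x\times y\times z\subseteq s}P_X(x)\times P_Y(y)\times P_Z(z)$, after which Lemma~\ref{lem:sim} applies; and since in fact all the structural isomorphisms are graphs of bijections under which the specifications correspond exactly, the coherence diagrams hold strictly, so the appeal to transporting coherence from $\Set$ can be replaced by this simpler observation.
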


\begin{proof} This amounts to checking trivial equalities, in particular, that
the following diagram commutes: \hskip.5cm{\small(where $d$ is the natural
isomorphism $X\simeq X^{\bot\bot}$)}

\centerline{%
\xymatrix{X \linearArrow Y \ar[rr]^{\_^\bot} \ar[drr]_{d^{-1}_X\cdot\_\cdot
d^{\phantom{-1}}_Y} & & Y^\bot \linearArrow X^\bot \ar[d]^{\_^\bot} \\ & & X^{\bot\bot}
\linearArrow Y^{\bot\bot}}}

\noindent
It is immediate because $d=\Id$ and $\_^\bot=\_^\sim$.
\qed
\end{proof}
%%%>>>1

%%%%%%%%%%%%%%%%%%%%%%%%%%%%%%%%%%%%%%%%%%%%%%%%%%%%%%%%%%%%%%%%%%%%%%%%%%%%%%
\section{Exponentials} %%%<<<1
\label{Section:exp}

The category $\Int$ is thus a denotational model for multiplicative additive
linear logic. Let's now add the exponentials $!X$ and $?X$.

Unsurprisingly, we will use finite multisets; here are the necessary
definitions and notations:
\begin{definition}
Let $S$ be a set;
\begin{itemize}
\item if $(s_i)_{i\in I}$ and $(t_j)_{j\in J}$ are finite families on $S$, say
  $(s_i)\simeq(t_j)$ iff there is a bijection $\sigma$ from $I$ to $J$ such that
  $s_{i} = t_{\sigma(i)}$ for all $i$ in $I$.
\item A {finite multiset} over $S$ is an equivalence class of $\simeq$.
  We write $[s_i]$ for the equivalence class containing $(s_i)$.
\item $\Mulf(S)$ is the collection of finite multisets over $S$.
\item Concatenation of finite families\footnote{defined on the disjoint sum
  of the different index sets} can be lifted to multisets; it is
  written $+$.
\item If $x$ and $y$ are two subsets of $S$, we write $x\pr y$ for the set $\{
  [a,b] \ |\ a\in x\land b\in y\}$. Its indexed version is written $\Pr_{i\in
  I} x_i$; it is a kind of commutative product.
\item If $U$ and $V$ are two subsets of $\Mulf(A)$, the set $\{u+v\ |\ u\in U
  \land v\in V\}$ is written $U\pr V$ (same symbol, but no confusion
  arises).
\end{itemize}
\end{definition}

\begin{definition}
For $X=(|X|,P)$, define
$!X=(\Mulf(|X|),!P)$ where
\[[a_1,\dots a_n] \in !P(U) \quad\Leftrightarrow\quad
\big(\exists (x_i)_{1\leq i\leq n}\big)\  \Pr_{i} x_i\subseteq U
\land (\forall i=1,\dots n)\,a_i\in P(x_i)\]
Let $?X = \big(!(X^\bot)\big)^\bot$.
\end{definition}
Recall that a multiset $[a_i]$ is in $\Pr x_i$ iff there is a bijection
$\sigma$ s.t. $\forall i, a_i\in x_{\sigma(i)}$.

A useful intuition is that $[a_1,\dots]\in!P(U)$ iff $[a_1,\dots]$ is in a
``weak infinite tensor'' $\bigoplus_n X^{\tensor n}(U)$. In terms of
specifications and programs, it suggests multithreading: for an initial state
$[a_1,\dots a_n]$, start $n$ occurrences of the program in the states
$a_1$,\dots $a_n$; the final state is nothing but the multiset of all the $n$
final states.\footnote{The interpretation of $!$, like that of $\tensor$ is a
synchronous operation.} The ``weak'' part means that we forget the link
between a particular final state and a particular initial state.

\smallbreak
Note that this is a ``non-uniform'' model in the sense that the web of $!X$
contains all finite multisets, not just those whose underlying set is a seed.
It is thus closer to non-uniform (hyper)coherence semantics
(see~\cite{nonUniformThomas} or~\cite{nonUniformPierre}) than to the
traditional (hyper)coherence semantics.

\medbreak\goodbreak
\noindent
Let's prove a simple lemma about the exponentials:
\begin{lemma}\label{lem:exp}
  Suppose $U\subseteq \Mulf(|A|)$:
  \begin{itemize}
  \sitem{i} $[a] \in !A(U)$ iff there is some $x$ ``included'' in $U$ (\ie
  $\forall a\in x\ [a]\in U$) s.t. $a\in A(x)$;
  \sitem{ii} $l+l' \in !A(U)$ iff there are $V\pr V'\subseteq U$
  s.t. $l\in !A(V)$ and
  $l'\in!A(V')$;
  \sitem{iii} $[a] \in ?A(U)$ iff for all $\compl{x}$ ``included'' in $\compl{U}$, $a\in A(x)$;
  \sitem{iv} $l+l' \in ?A(U)$ iff for all $\compl{V}\pr \compl{V'}\subseteq
  \compl{U}$, $l\in ?A(V)$ or $l'\in?A(V')$.
  \end{itemize}
\end{lemma}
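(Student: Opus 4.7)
I plan to handle the four points by first proving \sit{i} and \sit{ii} directly from the definition of $!A$, and then obtaining \sit{iii} and \sit{iv} by dualization, using $?A(U)=\compl{!A^\bot(\compl{U})}$ together with an elementary change of variable $x\mapsto\compl{x}$.

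For \sit{i}, I would simply unfold: $[a]\in !A(U)$ asks for a one-element family $(x_1)$ with $\Pr_1 x_1\subseteq U$ and $a\in A(x_1)$. But $\Pr_1 x$ is literally $\{[a']\mid a'\in x\}$, so the condition $\Pr_1 x\subseteq U$ reads $\forall a'\in x,\ [a']\in U$, which is the paper's notion of ``$x$ included in $U$''. So \sit{i} is immediate from the definition.

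For \sit{ii}, the forward direction takes a witness $(x_i)_{1\le i\le m+n}$ for $l+l'=[a_1,\dots,a_m,b_1,\dots,b_n]\in!A(U)$, splits it into $(x_i)_{i\le m}$ and $(x_i)_{m<i\le m+n}$, and sets $V:=\Pr_{i\le m}x_i$, $V':=\Pr_{m<i\le m+n}x_i$. Then $V\pr V'=\Pr_i x_i\subseteq U$, and the two halves witness $l\in!A(V)$ and $l'\in!A(V')$ trivially. The converse glues: given $(x_i)$ for $l$ with $\Pr x_i\subseteq V$, and $(y_j)$ for $l'$ with $\Pr y_j\subseteq V'$, the concatenated family is a witness for $l+l'\in!A(V\pr V')\subseteq !A(U)$ by monotonicity. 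The only subtle bit is keeping the indexing straight, but nothing beyond a disjoint-sum-of-index-sets bookkeeping is needed; this is really the main (and only) computational content of the lemma.

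For \sit{iii}, expand the definition: $[a]\in ?A(U)$ iff $[a]\notin !A^\bot(\compl{U})$. By \sit{i} applied to $A^\bot$, the latter means there exists $y$ included in $\compl{U}$ with $a\in A^\bot(y)=\compl{A(\compl{y})}$, i.e.\ $a\notin A(\compl{y})$. Negating and setting $x:=\compl{y}$ yields: for every $y$ included in $\compl{U}$, writing $\compl{x}=y$, we have $a\in A(x)$; this is the statement. For \sit{iv}, the same dualization applied to \sit{ii} gives: $l+l'\in ?A(U)$ iff it is \emph{not} the case that there exist $W\pr W'\subseteq\compl{U}$ with $l\in !A^\bot(W)$ and $l'\in !A^\bot(W')$, i.e.\ for all such $W,W'$, $l\notin!A^\bot(W)$ or $l'\notin!A^\bot(W')$. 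Renaming $V:=\compl{W}$, $V':=\compl{W'}$ and using $?A(V)=\compl{!A^\bot(\compl{V})}$ produces the stated formulation. No real obstacle arises; the only care is to make the $\compl{\,}$ on both sides (argument of $?A$ and indexing set) match, which the change of variable handles cleanly.
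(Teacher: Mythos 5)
Your proposal is correct and follows essentially the same route the paper intends: (i) by direct unfolding, (ii) by splitting/concatenating the witness families (the paper leaves this as an exercise, which you carry out properly, including the identification $V\pr V'=\Pr_i x_i$), and (iii)--(iv) by dualizing via $?A(U)=\compl{!A^\bot(\compl{U})}$ with the change of variable $x\mapsto\compl{x}$, exactly as the paper's remark ``consequences of the definition of $?$ in terms of $!$'' suggests. Nothing further is needed.
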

\begin{proof}
The first point is immediate and the second is left as an exercise. The third
and last point are consequences of the definition of $?$ in terms of
$!$.
\qed
\end{proof}

\bigbreak
Define now the interpretation of proofs with exponentials:

\begin{itemize}
\sitem{10} if $\pi$ is \infer{\pi_1\vdash\Gamma,A}{\vdash\Gamma,?A}{}
then $\pi^*=\big\{\big(\gamma,[a]\big)\ |\ (\gamma,a)\in\pi_1^*\big\}$;

\sitem{11} if $\pi$ is \infer{\pi_1\vdash\Gamma}{\vdash\Gamma,?A}{}
then $\pi^*=\big\{\big(\gamma,[\,]\big)\ | \ \gamma\in\pi_1^*\big\}$;

\sitem{12} if $\pi$ is \infer{\pi_1\vdash\Gamma,?A,?A}{\vdash\Gamma,?A}{}
then $\pi^*=\big\{(\gamma,l+l')\ |\ (\gamma,l,l')\in\pi_1^*\big\}$;

\sitem{13} if $\pi$ is \infer{\pi_1\vdash?\Gamma,A}{\vdash?\Gamma,!A}{}\\
then we define $\big(\gamma_1,\ldots \gamma_l,[a_1\ldots a_n]\big) \in \pi^*$
if for each $j=1,\ldots l$, there is a partition $\gamma_j = \sum_{1\leq i\leq
n} \gamma_j^i$ and the following holds: for each $i=1,\ldots n$, $(\gamma_1^i,\ldots
\gamma_l^i,a_i)\in \pi_1^*$.
\end{itemize}

\begin{proposition} \label{prop:sound-exp}
  If $\pi$ a proof of\/ $\vdash\Gamma$, then $\pi^*$ is a seed of $\Gamma$.
\end{proposition}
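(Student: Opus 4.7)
The plan is to proceed by induction on the structure of $\pi$, extending the argument of Proposition~\ref{prop:sound-MALL}. The nine MALL rules have already been checked there, so only the four exponential rules (10)--(13) remain. In each case I will curry the conclusion via Lemma~\ref{lem:identity}, reducing the seed condition $\pi^* \subseteq (\Gamma \Par ?A)(\pi^*)$ (or $\pi^* \subseteq (?\Gamma \Par !A)(\pi^*)$ for promotion) to showing that, for every $u$ with $\gamma \in \Gamma^\bot(u)$, the image $\langle \pi^*\rangle u$ is rich enough for the exponential component of $\pi^*$ to land in $?A$ or $!A$. Lemma~\ref{lem:exp} then supplies the characterizations of $?A$ needed to verify these membership conditions.

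For dereliction (10), the image $\langle \pi^*\rangle u$ is exactly the set of singletons $[a']$ with $a' \in \langle \pi_1^*\rangle u$, so Lemma~\ref{lem:exp}(iii) reduces $[a] \in ?A(\langle \pi^*\rangle u)$ to checking $a \in A(x)$ for every $x \supseteq \langle \pi_1^*\rangle u$; this follows from the induction hypothesis together with monotonicity of $A$. For weakening (11), the key arithmetic observation is that the empty product $\Pr_{i\in\emptyset} x_i$ equals $\{[\,]\}$, so $[\,] \in ?A(U)$ collapses to $[\,] \in U$; it thus suffices to show $\pi_1^* \cap u \neq \emptyset$, which follows from the IH $\gamma \in \Gamma(\pi_1^*)$ together with $\gamma \notin \Gamma(\compl{u})$ and monotonicity of $\Gamma$. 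For contraction (12), Lemma~\ref{lem:exp}(iv) reduces $l+l' \in ?A(\langle \pi^*\rangle u)$ to showing that every commutative product $\compl{V}\pr \compl{V'} \subseteq \compl{\langle \pi^*\rangle u}$ forces $l \in ?A(V)$ or $l' \in ?A(V')$; the definition of $\pi^*$ pulls such a product back to a rectangle $\compl{V}\times\compl{V'} \subseteq \compl{\langle \pi_1^*\rangle u}$, after which the induction hypothesis for $\pi_1$ as a seed of $\Gamma\Par ?A \Par ?A$ supplies the disjunction.

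The main obstacle is promotion (13). Here the goal is to show, given $\gamma_j \in !(\Gamma_j^\bot)(U_j)$ for each $j = 1,\ldots,l$, that $[a_1,\ldots,a_n] \in !A(\langle \pi^*\rangle(U_1,\ldots,U_l))$, which by definition of $!A$ requires exhibiting a family $(y_i)_{1\leq i\leq n}$ with $\Pr_i y_i \subseteq \langle \pi^*\rangle(U_1,\ldots,U_l)$ and $a_i \in A(y_i)$ for each $i$. My plan is to fix a witness family $(x_j^k)_k$ certifying $\gamma_j \in !(\Gamma_j^\bot)(U_j)$, then use the partition $\gamma_j = \sum_i \gamma_j^i$ supplied by the definition of $\pi^*$ to carve out subfamilies whose commutative products $y_j^i$ certify $\gamma_j^i \in !(\Gamma_j^\bot)(y_j^i)$. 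Setting $y_i = \langle \pi_1^*\rangle(y_1^i,\ldots,y_l^i)$, the induction hypothesis applied to $(\gamma_1^i,\ldots,\gamma_l^i, a_i)\in \pi_1^*$ yields $a_i \in A(y_i)$, and the inclusion $\Pr_i y_i \subseteq \langle \pi^*\rangle(U_1,\ldots,U_l)$ should follow by reassembling the partitions. The most delicate step is managing the bijections that match elements of the multisets $\gamma_j$ to indices of the witness families $(x_j^k)_k$, and verifying that these bijections compose coherently when passing from $\pi_1^*$ to $\pi^*$; this is where the proof is technically most fragile.
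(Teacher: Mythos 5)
Your proof is correct, but it follows a genuinely different route from the paper's in the two non-trivial cases. The paper argues by contradiction, unfolding the dual transformers ($\Gamma^\bot\tensor{!A^\bot}$, etc.) on the complement of $\pi^*$, whereas you verify the seed condition directly through the curried characterization of Lemma~\ref{lem:identity}; for dereliction and weakening (which the paper dismisses as immediate) your observations --- $\langle\pi^*\rangle u$ consists of singletons over $\langle\pi_1^*\rangle u$, and $[\,]\in{?A}(U)$ collapses to $[\,]\in U$ so one only needs $\pi_1^*\cap u\neq\emptyset$ --- are exactly the right ones, and your contraction argument is the contrapositive of the paper's (your pull-back of $\compl{V}\pr\compl{V'}$ to a rectangle in $\compl{\langle\pi_1^*\rangle u}$ plays the role of the paper's ``lemma: $u\times V\times V'\subseteq\compl{\pi_1}$''). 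The real divergence is promotion: the paper first reduces $?\Gamma$ to a single formula via Lemma~\ref{lem:linearIsomorphism} ($!(X\with Y)={!X}\tensor{!Y}$) and only then runs the contradiction argument, while you handle a multi-formula context head-on by splitting each witness family $(x_j^k)_k$ for $\gamma_j\in{!B_j^\bot}(U_j)$ along the partition $\gamma_j=\sum_i\gamma_j^i$, setting $y_j^i=\Pr_{k\in S_j^i}x_j^k$ and $y_i=\langle\pi_1^*\rangle(y_1^i\times\dots\times y_l^i)$. This buys you independence from Lemma~\ref{lem:linearIsomorphism}, at the price of doing the multiset bookkeeping in situ. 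The step you flag as fragile does go through: since membership in $!B_j^\bot(U_j)$ is invariant under the bijections defining multisets, you may choose the enumeration of $\gamma_j$ aligned with the partition, so each restricted family certifies $\gamma_j^i\in{!B_j^\bot}(y_j^i)$; and for the reassembly, any $[d_1,\dots,d_n]\in\Pr_i y_i$ yields tuples $(\delta_1^i,\dots,\delta_l^i,d_i)\in\pi_1^*$ with $\delta_j^i\in y_j^i$, whence $\delta_j=\sum_i\delta_j^i\in\Pr_k x_j^k\subseteq U_j$ and $(\delta_1,\dots,\delta_l,[d_1,\dots,d_n])\in\pi^*$ by the very definition of rule \textit{(13)}, giving $\Pr_i y_i\subseteq\langle\pi^*\rangle(U_1\times\dots\times U_l)$ as required (the general, non-rectangular argument of the tensor then follows by monotonicity of $!A$, and the degenerate cases $n=0$ or $l=0$ are absorbed automatically since empty products equal $\{[\,]\}$). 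So your plan is sound and complete in outline; it only remains to write out this bookkeeping, which you have correctly identified as the sole delicate point.
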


\begin{proof} Points~\sit{10} and \sit{11} are immediate.
\parindent=0pt

\sit{12}: suppose $\pi_1$ is a seed $\Gamma,?A,?A$ and let
$(\gamma,l+l')$ be an element of $\pi$.

By contradiction, suppose that $(\gamma,l+l')\notin \Gamma,?A(\pi)$

\step{\Leftrightarrow}{}

$(\gamma,l+l') \in \Gamma^\bot\tensor!A^\bot(\compl{\pi})$

\step{\Leftrightarrow}{for some $u\times U\subseteq \compl{\pi}$}

$\gamma\in\Gamma^\bot(u) \land l+l'\in!A^\bot(U)$

\step{\Leftrightarrow}{Lemma~\ref{lem:exp}}

$\gamma\in\Gamma^\bot(u) \land (\exists
V\pr V'\subseteq U)\ l\in !A^\bot(V) \land l'\in!A^\bot(V')$

\step{\Rightarrow}{lemma: $u\times V\times V'\subseteq\compl{\pi_1}$}

$%(\exists u\times V\times V'\subseteq\compl{\pi_1})\ 
\gamma\in\Gamma^\bot(u) \land
l\in!A^\bot(V) \land l'\in!A^\bot(V')$

\step{\Rightarrow}{}

$(\gamma,l,l')\in\Gamma^\bot\tensor!A^\bot\tensor!A^\bot(\compl{\pi_1})$

\step{\Leftrightarrow}{}

$(\gamma,l,l') \notin \Gamma,?A,?A(\pi_1)$, which contradicts the fact that
$\pi_1$ is a seed in $\Gamma,?A,?A$.

\medbreak
\sit{13}: suppose that $\Gamma$ contains only one formula $B$.
The general case will follow from a lemma proved below
(Lemma~\ref{lem:linearIsomorphism}).
Suppose that $\pi_1$ is a seed in $?B,A$; let
$(l,[a_1,\dots a_n])$ be in $\pi$, \ie $(l_i,a_i)\in\pi_1$ for
$i=1,\dots n$, for some partition $(l_1,\dots l_n)$ of $l$.

Suppose by contradiction that $(l,[a_1\dots a_n])\notin ?B,!A(\pi)$

\step{\Leftrightarrow}{}

$(l,[a_1,\dots a_n]) \in !B^\bot\tensor?A^\bot(\compl{\pi})$

\step{\Leftrightarrow}{for some $U\times V\subseteq\compl{\pi}$}

$l\in!B^\bot(U) \land [a_1,\dots a_n]\in ?A^\bot(V)$

\step{\Rightarrow}{definition of $?A$}

$l\in!B^\bot(U) \land
\Big(\big(\forall
(x_i)\big)\ \Pr x_i\subseteq \compl{V} \Rightarrow (\exists i)\
a_i\in A(\compl{x_i})\Big)$

\step{\Leftrightarrow}{Lemma \ref{lem:exp} for $l$: for some
$(U_i)$ s.t. $\Pr_i
U_i\subseteq U$}

$(\forall i)\ l_i\in!B^\bot(U_i)\land \Big(\big(\forall
(x_i)_{i}\big)\ \Pr_{i} x_i\subseteq \compl{V} \Rightarrow (\exists i)\ \dots
%a_i\in A(\compl{x_i})\Big)
$

\step{\Rightarrow}{define $x_i = \langle \pi_1\rangle U_i$; lemma: $\Pr_{i}x_i \subseteq \compl{V}$}

$\Big((\forall i)\ l_i\in!B^\bot(U_i)\Big) \land \Big((\exists i)\ a_i\in
A(\compl{x_i})\Big)$

\step{\Rightarrow}{lemma: $U_i\times x_i\subseteq\compl{\pi_1}$}

$(\exists i)\ (\exists U_i\times x_i\subseteq\compl{\pi_1})\ l_i\in!B^\bot(U_i) \land
a_i\in A(\compl{x_i})$

\step{\Leftrightarrow}{}

$(l_i,a_i)\in !B^\bot\tensor A^\bot(\compl{\pi_1})$

\step{\Leftrightarrow}{}

$(l_i,a_i) \notin ?B,A(\pi_1)$, which contradicts the fact that $\pi_1$ is a
seed in $?B,A$.
\qed
\end{proof}

\medbreak
\begin{lemma}  \label{lem:linearIsomorphism}
  For all interfaces $X$ and $Y$, we have $!(X\with Y) = !X \tensor
  !Y$.
\end{lemma}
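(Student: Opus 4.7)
\medskip
\noindent\textbf{Proof plan.}
The first task is to identify the two state spaces. The natural bijection
$\Mulf(|X|+|Y|) \simeq \Mulf(|X|)\times\Mulf(|Y|)$, obtained by splitting a finite
multiset over $|X|+|Y|$ into its $|X|$-part and $|Y|$-part, will be used implicitly
throughout. I will similarly use the isomorphism
$\Pow(|X|+|Y|)\simeq\Pow(|X|)\times\Pow(|Y|)$ to write any subset of $|X|+|Y|$ as
a pair $(x,y)$ with $x\subseteq|X|$ and $y\subseteq|Y|$.

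Fix $W\subseteq\Mulf(|X|)\times\Mulf(|Y|)$ and an element
$(l,l')=\big([a_1,\dots,a_n],[b_1,\dots,b_m]\big)$, which corresponds to the
multiset $[a_1,\dots,a_n,b_1,\dots,b_m]\in\Mulf(|X|+|Y|)$. Unfolding the definition
of $!(P_X\with P_Y)$, membership in $!(P_X\with P_Y)(W)$ means: there exist
$z_i=(x_i,y_i)$ for $i=1,\dots,n+m$ with $\Pr_i z_i\subseteq W$, such that each
element of the multiset lies in $(P_X\with P_Y)(z_i)=\big(P_X(x_i),P_Y(y_i)\big)$.
Since $a_i\in|X|$, the condition on $a_i$ only constrains $x_i$; symmetrically for
$b_j$. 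Hence, by monotonicity of $\Pr$ in each argument, I may shrink the
``irrelevant'' coordinate and assume without loss of generality that $z_i=(x_i,\emptyset)$ for
$i\leq n$ and $z_{n+j}=(\emptyset,y_j)$ for $j\leq m$.

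Under this reduction, an element of $\Pr_i z_i\subseteq\Mulf(|X|+|Y|)$ picks a
point in each $z_i$; the first $n$ choices land in $|X|$ and the last $m$ in $|Y|$.
Through the bijection this becomes the set of pairs $(u,v)$ with
$u\in\Pr_i x_i$ and $v\in\Pr_j y_j$, i.e.\ the rectangle
$(\Pr_i x_i)\times(\Pr_j y_j)$. So membership on the $!(X\with Y)$ side is
equivalent to: there exist $(x_i)_{i\leq n}$ and $(y_j)_{j\leq m}$ with
$a_i\in P_X(x_i)$, $b_j\in P_Y(y_j)$, and $(\Pr_i x_i)\times(\Pr_j y_j)\subseteq W$.

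On the other side, $(l,l')\in(!P_X\tensor!P_Y)(W)$ unfolds by the definitions of
$\tensor$ and $!$ to: there exist $U\times V\subseteq W$, together with $(x_i)$,
$(y_j)$ satisfying $\Pr_i x_i\subseteq U$, $\Pr_j y_j\subseteq V$,
$a_i\in P_X(x_i)$ and $b_j\in P_Y(y_j)$. The equivalence with the previous
condition is now immediate: in one direction, take $U=\Pr_i x_i$ and
$V=\Pr_j y_j$; in the other, note $(\Pr_i x_i)\times(\Pr_j y_j)\subseteq U\times V\subseteq W$.
The only real obstacle in this proof is keeping the bookkeeping between the two
ways of slicing $\Mulf(|X|+|Y|)$ straight; once the rectangular description of
$\Pr_i z_i$ is in hand, everything matches by a direct unfolding of definitions.
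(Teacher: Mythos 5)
Your proof is correct and follows essentially the same route as the paper: transport everything along $\Mulf(|X|+|Y|)\simeq\Mulf(|X|)\times\Mulf(|Y|)$, use the observation that for an element of $|X|$ only the $X$-component of the argument of $P_X\with P_Y$ matters (so the other component can be shrunk to $\emptyset$), and then match the witnessing families $(x_i)$, $(y_j)$ with the rectangle $U\times V$ on the $!X\tensor!Y$ side. Your explicit justification of the shrinking step via monotonicity of $\Pr$, and of the identification of $\Pr_i z_i$ with $(\Pr_i x_i)\times(\Pr_j y_j)$, just spells out what the paper states informally before its two inclusions.
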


\begin{proof} The state spaces are isomorphic via
$\Mulf(|X|+|Y|)\simeq\Mulf(|X|)\times \Mulf(|Y|)$. We will use this
transparently, for example $l_X+l_Y\in R$ iff $(l_X,l_Y)\in R$. This is
possible because the sets are disjoint: we can always split a multiset in $x\pr
y$ into two multisets in $x$ and $y$.
(In other words: if $x\cap y=\emptyset$ then $x\pr y \simeq x\times y$.)

Notice also that $(1,a)\in X\with Y(x,y) \Leftrightarrow a\in X(x)$ so that
when considering a particular element of $X+Y(x,y)$, only one part of the
argument $(x,y)$ is really important; the other can be dropped (or replaced
with $\emptyset$).
\parindent=0pt

\smallbreak

$\subseteq$: suppose $[a_1,\dots a_n]+[b_1,\dots b_m] \in !(X\with
Y)(R)$

\step{\Leftrightarrow}{for some $(x_i)_{i=1\dots n}$ and $(y_j)_{j=1\dots m}$}

$\Pr_{i} x_i
\pr \Pr_j y_j \subseteq R \land (\forall i)\ a_i\in X(x_i) \land
(\forall j) b_j \in Y(y_j)$

\step{\Rightarrow}{define $U'=\Pr_i x_i$ and $V'=\Pr_j y_j$}

$(\exists U'\times V'\subseteq R)\ [a_i]\in !X(U') \land [b_j]\in!Y(V')$

\step{\Leftrightarrow}{}

$([a_1,\dots a_n],[b_1,\dots b_m])\in !X \tensor !Y (R)$.

\smallbreak
$\supseteq$: suppose
$([a_1,\dots a_n],[b_1,\dots b_m])\in !X \tensor !Y (R)$

\step{\Leftrightarrow}{for some $U'\times V'\subseteq R$}

$[a_1,\dots a_n]\in!X(U') \land [b_1,\dots b_m]\in!Y(V')$

\step{\Leftrightarrow}{for some $(x_i)$ \st $\Pr x_i\subseteq U'$ and $(y_j)$ \st $\Pr y_j\subseteq V'$}

$(\forall i)\ a_i\in X(x_i) \land (\forall j)\ b_j\in Y(y_j)$

\step{\Rightarrow}{$\Pr_i x_i \times 
\Pr_j y_j \subseteq U\times V$ and thus $\Pr_i x_i \pr 
\Pr_j y_j \subseteq R$}

$[a_1,\dots a_n]+[b_1,\dots b_m] \in !(X\with Y) (R)$.
\qed
\end{proof}

\noindent
This allows us to transform any sequent $?\Gamma=?B_1\Par\dots?B_n$ into
$?(B_1\plus\dots B_n)$, and thus, formally ends the proof of
Proposition~\ref{prop:sound-exp} point \sit{13}.
%%%>>>1

%%%%%%%%%%%%%%%%%%%%%%%%%%%%%%%%%%%%%%%%%%%%%%%%%%%%%%%%%%%%%%%%%%%%%%%%%%%%%%
\section{Linear Interfaces and Linear Seeds} %%%<<<1
\label{Section:linear-seeds}

What is the structure of those interfaces that come from a linear formula? The
answer is unfortunately trivial:

\begin{proposition}
  If $F$ is a linear formula, then $P_F = \Id_{\Pow|F|}$.
\end{proposition}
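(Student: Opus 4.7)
The plan is a straightforward structural induction on $F$, showing at each step that the connective preserves the property ``$P_X = \Id_{\Pow|X|}$''. The key observation that keeps the induction running cleanly is that $\_^\bot$ preserves the identity: if $P = \Id$, then $P^\bot(x) = \compl{P(\compl{x})} = \compl{\compl{x}} = x$. So once the multiplicatives and additives each preserve identity in one polarity, the dual connective is immediate.

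The base cases are read off the definitions: $\bone = (\{*\},\Id)$ and $\bzero=(\emptyset,\Id)$, and $\top$, $\bot$ follow by the preceding observation. For the inductive step, suppose $P_X = \Id$ and $P_Y = \Id$. The $\with$ case is immediate from $P_X\with P_Y(x,y) = (P_X(x),P_Y(y)) = (x,y)$. For $\tensor$, I would compute
\[
P_X\tensor P_Y(r) = \bigcup_{x\times y\subseteq r} x\times y,
\]
and verify that this equals $r$: the inclusion $\subseteq$ is definitional, and for $\supseteq$ each $(a,b)\in r$ is witnessed by the singleton rectangle $\{a\}\times\{b\}\subseteq r$. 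The $\Par$ and $\plus$ cases then come for free from De Morgan and the fact that $\_^\bot$ preserves identity.

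The interesting case is $!X$. With $P_X = \Id$, the definition unfolds to: $[a_1,\dots,a_n]\in !\Id(U)$ iff there exist $(x_i)$ with $\Pr_i x_i\subseteq U$ and $a_i\in x_i$ for each $i$. The $\Rightarrow$ direction gives $[a_1,\dots,a_n]\in\Pr_i x_i\subseteq U$, so $!\Id(U)\subseteq U$. Conversely, for any $[a_1,\dots,a_n]\in U$, choose $x_i = \{a_i\}$; then $\Pr_i x_i = \{[a_1,\dots,a_n]\}\subseteq U$ and $a_i\in x_i$, witnessing membership in $!\Id(U)$. Hence $!\Id = \Id$, and then $?X = (!X^\bot)^\bot$ is identity by two applications of the duality remark.

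The only mild subtlety — and the one step I would write out carefully rather than wave at — is the singleton-rectangle argument for $\tensor$ and, analogously, the singleton-multiset argument in the $!$ case; everything else is a direct unfolding of definitions. No calculation is genuinely hard, but the $!$ case is where one needs to pay attention to the precise definition of $\Pr_i x_i$ (using the bijection reading of ``$[a_i]\in\Pr x_i$'' recalled just after the definition of $!X$).
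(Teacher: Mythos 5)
Your proof is correct and follows essentially the same route as the paper: a structural induction whose only non-immediate case is the exponential, handled exactly as in the paper's proof via the identity bijection for $[a_i]\in\Pr_i x_i$ in one direction and the singleton witnesses $x_i=\{a_i\}$ in the other. The extra detail you give for $\tensor$ (singleton rectangles) and for the dual connectives via $\_^\bot$ preserving $\Id$ is just the part the paper dismisses as immediate.
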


\begin{proof} Immediate induction. Let's treat the case of
the exponentials: \parindent=0pt

suppose $F(x)=x$; suppose moreover that $[a_1,\dots a_n]\in U$

\step{\Rightarrow}{}

$a_i\in F(\{a_i\})$ for all $i$ and $\Pr \{a_i\} = \{[a_1,\dots
a_n]\}\subseteq U$

\step{\Rightarrow}{}

$[a_1,\dots a_n]\in !F(U)$

\smallbreak
Similarly, suppose $[a_1,\dots a_n]\in !F(U)$

\step{\Rightarrow}{}

each $a_i\in F(x_i)=x_i$ for some $(x_i)$ s.t. $\Pr x_i\subseteq U$

\step{\Rightarrow}{$[a_1,\dots a_n]\in\Pr x_i$}

$[a_1,\dots a_n]\in U$.
\qed
\end{proof}

In particular, every subset of $|F|$ is a clique and an anticlique: the
situation is thus quite similar to the purely relational model. In the
presence of atoms however, interfaces become much more interesting.

Adding atoms is sound because the proof of Proposition~\ref{prop:sound-MALL}
doesn't rely on the particular properties of interfaces. Note that
we need to introduce a general axiom rule and its interpretation:
\begin{itemize}
\sitem{14} if $\pi$ is \infer{}{\vdash X,X^\bot}{}
then $\pi^*=\Id_{|X|}=\{(a,a)\ |\ a\in|X|\}$.
\end{itemize}
\noindent
This is correct in the sense that $\pi^*$ is always a clique in $X\Par X^\bot$.

With such atoms, the structure of linear interfaces gets non
trivial.\footnote{We can extend this to a model for $\PI^1$ logic, and even to
full second order, see~\cite{PTSecondOrder}.} For example, let's consider the
following atom $X=\big(\{\m,\p\},P\big)$ defined by:
\begin{itemize}
\item $P(\emptyset)=\emptyset$ and $P(|X|)=|X|$;
\item $P(\{\p\})=\{\m\}$ and $P(\{\m\})=\{\p\}$.
\end{itemize}
This is the simplest example of an interesting interface, and
corresponds to a ``switch'' specification. (Interpret $\m$ as ``off'' and $\p$
as ``on''.)
\begin{lemma}
if $P$ is the above specification:
\begin{itemize}
\sitem{i} $P^\bot = P$;
\sitem{ii} $P\cdot P=\Id$;
\sitem{iii} $\Sd(X) = \big\{\emptyset,\{\p,\m\}\big\}$;
\sitem{iv} $\big\{(\p,\m),(\m,\p)\big\} \in \Sd(X\tensor X)$.
\end{itemize}
\end{lemma}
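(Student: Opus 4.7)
The plan is to verify all four points by direct finite case analysis on the four subsets of the two-element state space $|X|=\{\m,\p\}$. Since the specification $P$ is given explicitly on each of those subsets, every computation reduces to a small table and no structural argument is needed.

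For (i), I would write out $P^\bot(x) = \compl{P(\compl{x})}$ on each of the four subsets and observe that complementation swaps $\emptyset$ with $|X|$ and $\{\p\}$ with $\{\m\}$; since $P$ performs the same swap on singletons while fixing $\emptyset$ and $|X|$, the two swaps together with the outer complement combine to reproduce $P$ itself. Point (ii) is equally immediate: two applications of $P$ fix the extremes (which $P$ already fixes) and perform two singleton swaps, which cancel.

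For (iii), I would test the defining condition $x\subseteq P(x)$ on each of the four subsets; only $\emptyset$ and $\{\p,\m\}$ pass, since each singleton is sent by $P$ to the opposite singleton. For (iv), I would unfold the definition of $P\tensor P$ on $r=\{(\p,\m),(\m,\p)\}$: the only nontrivial rectangles contained in $r$ are $\{\p\}\times\{\m\}$ and $\{\m\}\times\{\p\}$, and the corresponding images $P(\{\p\})\times P(\{\m\}) = \{(\m,\p)\}$ and $P(\{\m\})\times P(\{\p\}) = \{(\p,\m)\}$ together cover $r$.

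There is no genuine obstacle, but point (iv) is the most informative step: together with (iii) it shows that $r$ is a seed of $X\tensor X$ even though neither of its projections is a seed of $X$, so $\Sd(X\tensor X)$ is strictly larger than what Lemma~\ref{lem:seed-tensor} produces from products of seeds of $X$. This is precisely the phenomenon that makes the model nontrivial in the presence of atoms.
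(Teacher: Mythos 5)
Your case-by-case verification is correct and is exactly what the paper intends: its own proof is simply ``trivial computation,'' and your tables for $P^\bot$, $P\cdot P$, the seed condition $x\subseteq P(x)$, and the two rectangles $\{\p\}\times\{\m\}$, $\{\m\}\times\{\p\}$ inside $r$ carry it out correctly. Your closing remark that \sit{iv} exhibits a seed of $X\tensor X$ containing no product of nonempty seeds of $X$ matches the paper's own comment following the lemma.
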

\begin{proof}
This is just trivial computation...
\qed
\end{proof}

Point \sit{iv} shows in particular that a seed in $X\tensor Y$ needs not contain a
product of seeds in $X$ and $Y$. (Compare with
Lemma~\ref{lem:seed-tensor}.)

\smallbreak
The hierarchy generated from this single interface is however still relatively
simple: call a specification \emph{deterministic} if it commutes with
non-empty unions and intersections.
\begin{lemma}
  Let $F$ be any specification constructed from the above $P$ and the linear
  connectives. Then $F$ is deterministic. Moreover, $F$ is of the form
  $\langle f\rangle$ where $f$ is an obvious bijection on the state space of
  $F$.\footnote{where $\langle f\rangle(x) = \{f(a)\ |\ a\in x\}$}
\end{lemma}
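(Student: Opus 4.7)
The plan is to proceed by structural induction on $F$, proving the stronger statement: for every $F$ built from $P$ and the linear connectives, there is a bijection $f_F$ on $|F|$ such that $F = \langle f_F\rangle$. Determinism then follows for free, since a bijection preserves both unions (direct images always do) and non-empty intersections (this is where injectivity of $f_F$ enters).

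For the base cases, the switch atom $P$ is $\langle f\rangle$ with $f$ the transposition of $\p$ and $\m$; the unit $\bone = (\{*\},\Id)$ is $\langle \Id\rangle$; and $\bzero$ has empty state space so there is nothing to check. For the inductive step, I would treat the connectives in the following order, each time exhibiting the bijection and verifying the identity by chasing the definition.

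The key observation, to be established first, is that if $F = \langle f\rangle$ for a bijection $f$, then $F^\bot = F$. Indeed, a bijection commutes with complementation on images, so $\overline{\langle f\rangle(\overline{x})} = \langle f\rangle(x)$. This collapses the negation cases and reduces $\plus$ to $\with$ and $\Par$ to $\tensor$ on the nose. For $\tensor$: if $F = \langle f\rangle$ and $G = \langle g\rangle$, I claim $F\tensor G = \langle f\times g\rangle$. The inclusion $\langle f\times g\rangle(r) \subseteq F\tensor G(r)$ uses singleton rectangles $\{a\}\times\{b\}\subseteq r$; the reverse is immediate from $F(x)\times G(y) = f(x)\times g(y)$. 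For $\with$: using the isomorphism $\Pow(|X|+|Y|)\simeq \Pow(|X|)\times\Pow(|Y|)$, the bijection is just the disjoint sum $f+g$.

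The main work is in the exponentials. Given $F = \langle f\rangle$, define $!f : \Mulf(|F|)\to\Mulf(|F|)$ by $!f([a_1,\dots,a_n]) = [f(a_1),\dots,f(a_n)]$; this is a bijection because $f$ is. To show $!F = \langle !f\rangle$, I would unfold the definition of $!F(U)$: a membership $[a_1,\dots,a_n]\in !F(U)$ gives witnesses $(x_i)$ with $\Pr_i x_i\subseteq U$ and $a_i\in f(x_i)$, so each $a_i = f(a'_i)$ with $a'_i\in x_i$; the multiset $[a'_1,\dots,a'_n]$ lies in $\Pr_i\{a'_i\}\subseteq\Pr_i x_i\subseteq U$, and applying $!f$ recovers $[a_1,\dots,a_n]$. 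The converse direction uses singleton choices $x_i = \{a'_i\}$. Then $?F = (!(F^\bot))^\bot = (!F)^\bot = !F$ by the negation collapse, so $?$ is handled automatically.

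The hard part is the exponential calculation, precisely because the definition of $!P$ is the most baroque of the connectives and one has to shuffle witnessing families $(x_i)$ back and forth with the underlying bijection; everything else is clean and follows the same pattern once the negation collapse is noted. Finally, determinism is a one-line corollary: $\langle f_F\rangle$ commutes with arbitrary unions by definition, and with non-empty intersections because $f_F$ is injective.
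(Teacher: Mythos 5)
Your proposal is correct; the paper actually states this lemma without giving a proof, and your argument --- the strengthened induction exhibiting $F=\langle f_F\rangle$ for a bijection $f_F$, with the key collapse $\langle f\rangle^\bot=\langle f\rangle$ for bijections reducing $\Par$, $\plus$, $?$ and the units to the $\tensor$, $\with$ and $!$ cases --- is precisely the ``obvious bijection'' construction the statement alludes to, with determinism following from injectivity as you say.
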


\medbreak
A less trivial (in the sense that it is not deterministic) specification is
the following: if $X$ is a set, $\magic_X(x)=X$. In terms of programming, the
use of the $\magic$ command allows to reach any predicate, even the empty one!

\begin{lemma}
  $\Id_{|X|} \varsubsetneq \magic_X \linearArrow \magic_X(\Id_{|X|})$ if
  $X\neq\emptyset$.
\end{lemma}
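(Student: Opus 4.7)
The plan is to compute the right-hand side explicitly and observe that it collapses to the full product. First, I would unfold the definition using Lemma~\ref{lem:identity}: a pair $(a,b)$ lies in $\magic_X \linearArrow \magic_X(\Id_{|X|})$ iff for every $x\subseteq|X|$, $a\in\magic_X(x)\Rightarrow b\in\magic_X(\langle\Id_{|X|}\rangle x)$. Since by definition $\magic_X(y)=|X|$ for every $y\subseteq|X|$, both sides of the implication are vacuously true for every choice of $a,b\in|X|$ and every $x$. Hence $\magic_X \linearArrow \magic_X(\Id_{|X|}) = |X|\times|X|$.

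Then I would simply remark that $\Id_{|X|}\subseteq|X|\times|X|$ and that the inclusion is strict as soon as $|X|$ contains two distinct elements (any pair $(a,b)$ with $a\neq b$ witnesses the strictness). Under the hypothesis that $X$ is non-empty, the only degenerate case is the singleton, which one could either exclude explicitly or treat as a side-remark: for $|X|\geq 2$, the inclusion is proper.

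As a sanity check, I would also briefly verify the same conclusion through the dual formulation: $\magic_X^\bot(y)=\compl{\magic_X(\compl{y})}=\compl{|X|}=\emptyset$, so $\magic_X\tensor\magic_X^\bot(r)=\bigcup_{x\times y\subseteq r}|X|\times\emptyset=\emptyset$, and therefore $(\magic_X\tensor\magic_X^\bot)^\bot(r)=\compl\emptyset=|X|\times|X|$. This alternate computation confirms that the non-determinism of $\magic_X$ (it has no ``anti-behaviour'') trivializes the function space.

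There is no real obstacle here: the statement is essentially a sanity check illustrating that $\magic_X$ is genuinely non-trivial in the predicate transformer model, and the whole proof is a one-line unfolding plus a cardinality remark. The only mild subtlety to flag is the precise role of the hypothesis $X\neq\emptyset$, which by itself is not quite sufficient (the singleton case gives equality); so the proof needs either $|X|\geq 2$ or an implicit reading of ``non-trivial''.
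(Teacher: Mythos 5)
Your computation is correct, and since the paper states this lemma without any proof, the direct unfolding you give is exactly the intended argument: by Lemma~\ref{lem:identity}, $(a,b)\in\magic_X\linearArrow\magic_X(r)$ asks that $a\in\magic_X(x)\Rightarrow b\in\magic_X(\langle r\rangle x)$ for all $x$, and both membership statements hold trivially because $\magic_X$ is constantly $|X|$, so $\magic_X\linearArrow\magic_X$ is the constant transformer with value $|X|\times|X|$ (your dual check via $\magic_X^\bot=\emptyset$ and $(\magic_X\tensor\magic_X^\bot)^\bot$ confirms the same thing). Your caveat about the hypothesis is also well taken and is a point about the statement rather than a gap in your proof: when $|X|$ is a singleton, $\Id_{|X|}=|X|\times|X|$ and the inclusion is an equality, so $X\neq\emptyset$ alone does not give strictness; the lemma should require $|X|$ to have at least two elements. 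The intended moral --- that $\Id_{|X|}$ is a seed of $\magic_X\linearArrow\magic_X$ without being a fixed point, so seeds cannot be redefined by $x=P(x)$ --- survives unchanged under that corrected hypothesis.
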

Thus we cannot strengthen the definition of seeds to read ``$x=P(x)$'' without
imposing further constraints on our specifications. It is still an open
question to find a nice class of predicate transformers for which it would be
possible. (However, considerations about second order seem to indicate that
strengthening the definition of seeds in such a way is not a good idea.)

\medbreak
In the case with atoms, because the structure of seeds (sup-lattice) is quite
different from the structure of cliques in the ...-coherent model (domain), it
is difficult to relate seeds and cliques. In particular, a seed needs not be a
clique (since the union of arbitrary cliques is not necessarily a clique); and
a clique needs not be a seed (since a subset of a seed is not necessarily a
seed).
%%%>>>1

%%%%%%%%%%%%%%%%%%%%%%%%%%%%%%%%%%%%%%%%%%%%%%%%%%%%%%%%%%%%%%%%%%%%%%%%%%%%%%
\section*{Conclusion} %%%<<<1

%As stated in section \ref{section:category}, interfaces are a
%particular example of specification structure over $\Rel$.\footnote{We do
%believe however that this specific example is particularly interesting...} The
%methodology from~\cite{specstruct} can be applied to obtain the
%category~$\Int$: in turn
%\begin{itemize}
%  \item put $\Prop_X \equiv \Pow(X)\to\Pow(X)$; \kern.5cm{\small(monotonic
%    predicate transformers)}
%  \item put $x \vDash P$ iff $x\subseteq P(x)$; \kern1.015cm{\small($x$ is a
%    safety property of $P$)}
%  \item define $P^\bot$ and $P\tensor Q$.
%\end{itemize}
%Considering the definition of the linear negation, it is ``surprising'' that the
%(degenerated) linear structure of $\Rel$ lifts to $\Int$.

One aspect which was not really mentioned here is the fact that linear arrows
from $A$ to $B$ are equivalent to the notion of \emph{forward data refinement}
(Lemma~\ref{lem:sim}) from the refinement calculus. In particular, a linear
proof of $A \linearArrow B$ is a proof that specification $B$
\emph{implements} specification $A$. It would interesting to see if any
application to the refinement calculus could be derived from this work.  In
the same direction, trying to make sense of the notions of \emph{backward data
refinement}, or of \emph{general data refinement} in terms of linear logic
could prove interesting.\footnote{A data refinement from specification $F$ to
specification $G$ is a predicate transformer $P$ s.t. $P\cdot F \subseteq
G\cdot P$; a forward [resp. backward] data refinement is a data refinement
which commutes with arbitrary unions [resp. arbitrary intersections].}

\medbreak
The fact that this model is degenerate in the propositional case is
disappointing, but degeneracy disappear when we consider $\PI^1$ logic, and
\textit{a fortiori} when we consider full second-order
(see~\cite{PTSecondOrder}). The point of extending this propositional model to
$\PI^1$ is to remove the dependency on specific valuations for the atoms
present in a formula.

\medbreak
One the interesting consequences of this work is that a a proof of a
formula~$F$ gives a guarantee that the system specified by the formula~$F$ can
avoid deadlocks seems to point toward other fields like process calculi and
similar models for ``real'' computations.  This direction is currently being
pursued together with the following link with the differential lambda-calculus
(\cite{difflambda}): one property of this model which doesn't reflect any
logical property is the following; we have a natural transformation
$A\linearArrow !A$ called \emph{co-dereliction}, which has a natural
interpretation in terms of differential operators on formulas
(see~\cite{diffnet}).  Note that such a natural transformation forbids any
kind of completeness theorem, at least as far as ``pure'' linear logic is
concerned.

\smallbreak
%%%>>>1

\bigbreak
\bibliographystyle{splncs}
%\bibliography{denot_PT} %%%<<<1

\begin{thebibliography}{10}

\bibitem{LL}
Girard, J.Y.:
\newblock Linear logic.
\newblock Theoretical Computer Science \textbf{50} (1987)

\bibitem{difflambda}
Ehrhard, T., Regnier, L.:
\newblock The differential lambda calculus.
\newblock Theoretical Computer Science \textbf{309} (2003)  1--41

\bibitem{diffnet}
Ehrhard, T., Regnier, L.:
\newblock Differential interaction nets.
\newblock unpublished note (2004)

\bibitem{newalg}
Gardiner, P.H.B., Martin, C.E., de~Moor, O.:
\newblock An algebraic construction of predicate transformers.
\newblock Science of Computer Programming \textbf{22} (1994)  21--44

\bibitem{RC}
Back, R.J., von Wright, J.:
\newblock Refinement Calculus: a systematic introduction.
\newblock Graduate texts in computer science. Springer-Verlag, New York (1998)

\bibitem{hyper}
Ehrhard, T.:
\newblock Hypercoherences: a strongly stable model of linear logic.
\newblock Mathematical Structures in Computer Science \textbf{3} (1993)
  365--385

\bibitem{finiteness}
Ehrhard, T.:
\newblock Finiteness spaces.
\newblock to appear in Mathematical Structures in Computer Science (2004)

\bibitem{productRC}
Back, R.J., von Wright, J.:
\newblock Product in the refinement calculus.
\newblock Technical Report 235, Turku Center for Computer Science (1999)

\bibitem{specstruct}
Abramsky, S., Gay, S.J., Nagarajan, R.:
\newblock A specification structure for deadlock-freedom of synchronous
  processes.
\newblock Theoretical Computer Science \textbf{222} (1999)  1--53

\bibitem{nonUniformThomas}
Bucciarelli, A., Ehrhard, T.:
\newblock On phase semantics and denotational semantics: the exponentials.
\newblock Annals of Pure and Applied Logic \textbf{109} (2001)  205--241

\bibitem{nonUniformPierre}
Boudes, P.:
\newblock Non-uniform hypercoherences.
\newblock In Blute, R., Selinger, P., eds.: Electronic Notes in Theoretical
  Computer Science. Volume~69., Elsevier (2003)

\bibitem{PTSecondOrder}
Hyvernat, P.:
\newblock Predicate transformers and linear logic: second order.
\newblock unpublished note (2004)

\end{thebibliography}

%%%>>>1
\end{document}